\documentclass[onecolumn, 11pt]{IEEEtran}

\usepackage{epsfig}
\usepackage{graphicx}
\usepackage{mathtools}
\usepackage{amsmath, amsthm, amsfonts}
\usepackage{amssymb, bbm}
\usepackage{booktabs}
\usepackage{cite}
\usepackage{dsfont}
\usepackage{graphics}
\usepackage{xcolor}
\usepackage{enumerate}
\usepackage{clipboard}
\usepackage{comment}
\newclipboard{myclipboard}

\newcommand{\Cc}{\mathcal{C}}
\newcommand{\Xc}{\mathcal{X}}
\newcommand{\Yc}{\mathcal{Y}}
\newcommand{\RR}{{\mathbb{R}}}

\newcommand{\Pe}	{P_{\!\rm{e}}}

\newcommand{\Pc}	{P_{\!\rm{c}}}
\newcommand{\Cn} 	{\mathcal{C}_n}

\newcommand{\W}	 	{W}
\newcommand{\Wn} 	{\W^n}
\newcommand{\WW} 	{\boldsymbol{\W}}
\newcommand{\q}	 	{q}
\newcommand{\qn} 	{\q^{n}}
\newcommand{\qq} 	{\boldsymbol{\q}}
\newcommand{\PXn}	{P_{X^n}}
\newcommand{\PXnU}	{P_{X^n}^{\Cc_n}}
\newcommand{\PXX}	{\boldsymbol{P}}

\newcommand{\PYn}	{P_{Y^n}}

\newcommand{\Mn}	{M_n}
\newcommand{\Rn}	{R_n}
\newcommand{\pliminf}{\mathrm{p}\liminf_{n\to\infty}}

\newcommand{\plimsup}{\mathrm{p}\limsup_{n\to\infty}}

\newcommand{\Yn}{Y^n}
\newcommand{\Xn}{X^n}
\newcommand{\XX} 	{\boldsymbol{X}}
\newcommand{\YY} 	{\boldsymbol{Y}}
\newcommand{\IIq}	{\underline{I}_{\qq}}
\newcommand{\IIqsup}	{\overline{I}_{\qq}}

\newcommand{\II}	{\underline{I}}
\newcommand{\IIsup}	{\overline{I}}

\newcommand{\Xbn}{\bar{X}^n}
\newcommand{\yn}{y^n}
\newcommand{\xn}{x^n}
\newcommand{\x} {x}
\newcommand{\CM} 	{\widetilde{C}_{\qq}}

\renewcommand{\Pr}	{\mathbb{P}}

\newtheorem{theorem}{Theorem}
\newtheorem{lemma}{Lemma}

\newtheorem{definition}{Definition}
\newtheorem{proposition}{Proposition}
\newtheorem{corollaryProp}{Corollary}[proposition]

\newcommand{\EX}	{\mathbb{E}}

\newcommand{\DKL}	{{D}}

\DeclareMathOperator*{\argmax}{argmax}

\newcommand{\IND}	  [1]{\mathbbm{1}\{#1\}}

\begin{document}
%
\allowdisplaybreaks

\title{Mismatch Capacity under Stochastic Decoding}

\author{Francesc~Molina and~Albert~Guill\'en i F\`abregas
\thanks{Francesc Molina is with the Department of Signal Theory and Communications, Universitat Polit\`ecnica de Catalunya -- BarcelonaTech, 08034 Barcelona, Spain (e-mail: francesc.molina@upc.edu).
	A. Guill\'en i F\`abregas is with the Department of Engineering, University
of Cambridge, CB2 1PZ Cambridge, U.K., the Department of
Signal Theory and Communications and the Institute of Mathematics (IMTech), Universitat Polit\`ecnica de Catalunya -- BarcelonaTech,
08034 Barcelona, Spain (e-mail: guillen@ieee.org).
}
\thanks{This research was supported in part by the Spanish Government under grants MICIU/AEI/10.13039/501100011033 and ERDF/EU under grant PID2022-136512OB-C21; in part by the Catalan Government under grant 2021 SGR 01033; and in part by the European Research Council under Grant 101142747.}
}

\date{\today}
%

\maketitle

\begin{abstract}
This manuscript investigates channel capacity under mismatched stochastic likelihood decoding. We derive Feinstein- and Verdú-Han-style bounds on the error probability coded communication. These are used to obtain a general information-spectrum formula for the channel capacity under mismatched stochastic decoding. The mismatch capacity formula is expressed as the supremum over all input distribution sequences of the limit inferior in probability of the sequence of normalized mismatched information densities. The resulting capacity formula is the mismatched analog of the channel capacity formula for the matched case by Verd\'u and Han. We also show that when the sequence of normalized mismatched information densities is uniformly integrable, the capacity formula admits an upper-bound as the limit of the corresponding sequence of expectations.
This upper-bound is shown to be achievable for discrete-memoryless channels and product decoding metrics, showing that the Csiszár-Narayan conjecture is tight for mismatched stochastic decoders.

\end{abstract}

\IEEEpeerreviewmaketitle

\section{Introduction}
Most coding theorems in information theory are often established under the assumption that the channel transition law is known at the decoder. In this setting, the maximum likelihood decoder is the optimal decoder in the sense that it minimizes error probability. 
Other decoders that exploit the channel knowledge have also been studied in the literature, such as threshold, typical-set or stochastic likelihood decoders, and have been used to characterize channel capacity.

In situations where the channel law is unknown or too costly to compute, one naturally adopts sub-optimal or mismatched decoding metrics. Mismatched decoding is a variant of the reliable communication problem where the decoder employs a fixed, not necessarily optimal, decoding metric \cite{340469,370120,lapidoth1998reliable,Foundations}. Examples of mismatched decoding are surveyed in \cite{Foundations}: channel uncertainty where the decoder only has an imperfect channel estimate available; finite-precision decoder implementations in which the likelihoods are computed using quantized or approximate arithmetic; nearest neighbor decoding in additive non-Gaussian noise channels; 
zero-error and zero-undetected error communication; bit-interleaved coded modulation where independent bit-wise metrics are employed in place of the corresponding symbol metric; or computing achievable rates for channels with memory using simplified metrics.

To be specific, we consider a standard point-to-point setting in which $\Mn$ messages are to be reliably transmitted over a channel with transition law $\Wn(\yn|\xn)$ with input-output alphabets $\Xc^n$ and $\Yc^n$, respectively. 
We consider a communication system with an encoder and a decoder, where the encoder maps each equiprobable message $m=1,\dotsc,\Mn$ to an $n$-length codeword $\xn{(m)} = (\x_1{(m)}, \dots, \x_n{(m)})\in\Xc^n$.
We refer to the codebook as the set of all codewords $\mathcal{C}_n = \{ \xn{(1)}, \dots, \xn{(\Mn)} \}$, with rate $\Rn = \frac1n\log \Mn$. In its simplest instance, the mismatched decoding problem is formulated as a variant of maximum likelihood decoding, where the channel likelihood $\Wn(\yn|\xn)$ is replaced with a non-negative decoding metric $\qn(\xn,\yn)$. Upon observing $\yn$, the maximum metric decoder outputs the message whose codeword maximizes the decoding metric, i.e., 
\begin{align}\label{eq:mmdec}
	\hat{m} = \argmax_{1\leq m \leq \Mn} \qn(\xn(m), \yn).	
\end{align}

The channel capacity $C$ is defined as the supremum of all the code rates for which the error probability can be made arbitrarily small. Instead, the mismatch capacity $C_M$ is defined as the supremum of all the code rates for which the error probability can be made arbitrarily small for a fixed channel-metric pair $(\Wn, \qn)$. Even in the discrete-memoryless case, a single-letter expression of the mismatch capacity with the maximum metric decoder \eqref{eq:mmdec} remains elusive, apart from a small number cases. The literature is rich in lower bounds (see e.g. \cite{Foundations} and references therein) and only recently a few single-letter upper-bounds have been proposed \cite{UpperboundEhsan, MulticastAnelia, kangarshahi2022sphere, somekh2024upper}. The most common achievable rates are the generalized mutual information (GMI) \cite{kaplan1993ira} and the LM rate \cite{Hui83,CsiszarKorner81graph}, attained by i.i.d. and constant-composition random coding, respectively. Csiszár and Narayan in \cite{370120} conjectured that the mismatch capacity under decoder \eqref{eq:mmdec} for discrete-memoryless channels and product decoding metrics was given by the limiting multi-letter extension of the optimized LM rate. The conjecture remains open at the time of writing this article.

Given the difficulty of analyzing the mismatch capacity with decoder \eqref{eq:mmdec}, a number of alternative mismatched decoders have been considered in the literature alongside a variety of techniques. Inspired by Shannon \cite{shannon1957certain}, Verd\'u and Han used in \cite{HanInfoSpec,verdu1994general} information-spectrum techniques to lower-bound the error probability as the tail probability of the information density, and derive a capacity formula for general channels in terms of the limit inferior in probability of the sequence of normalized information densities.
Based on this information-spectrum framework, Somekh-Baruch derived a general formula for the mismatch capacity for a broad family of threshold-type decoders, including maximum metric, margin and constant-threshold decoders \cite{AneliaGenFormula}. Although very general, the capacity expression is multi-letter in nature and does not lend itself to simple analysis even in the discrete-memoryless case. For some specific decoders, excepting the decoder in \eqref{eq:mmdec}, the expression allows to prove that the Csiszár-Narayan conjecture is tight \cite{AneliaGenFormula}. 

In this work, we instead consider mismatched stochastic likelihood decoders and characterize average error probability and mismatch capacity.
The main appeal of stochastic decoders lies in the fact that they enable markedly simpler derivations while, in the matched case, achieving the same rate and error exponent as maximum likelihood decoding.
Stochastic decoders were first considered in quantum information theory (see e.g. \cite{holevo2002capacity}) within the framework of positive operator-valued measures, the mathematical framework for describing general quantum measurements. Yassaee {\em et al.} employed the stochastic likelihood decoder to provide general one-shot achievability derivations in single- and multi-user scenarios \cite{yassaee2013technique}. Building on the ideas introduced in \cite{LLDecoder} for mismatched stochastic decoders, 
we derive Feinstein- and Verd\'u-Han-type bounds on the average error probability for general multi-letter channels and decoding metrics.
Using the derived bounds, we find an information-spectrum formula for the mismatch capacity with stochastic decoders given as the supremum over all input distribution sequences of the limit inferior in probability of the sequence of normalized mismatched information densities, establishing a direct parallel with Verd\'u and Han's general formula for channel capacity \cite{HanInfoSpec,verdu1994general}.
An interesting consequence of our analysis is that the Csiszár-Narayan conjecture is tight for mismatched stochastic likelihood decoders.

\paragraph*{Notation} Vectors are written in lowercase italic letters, with their dimension indicated as a superscript and their components indexed by subscripts: $\x_i$ denotes the $i$-th element of the $n$-dimensional vector $\xn$. 
Random variables are denoted by uppercase italic letters, with an explicit dependence on their dimension: $\Xn$ is an $n$-dimensional random vector. $|\mathcal{A}|$ denotes the cardinality of set $\mathcal{A}$. The indicator function is denoted by $\mathbbm{1}\{\cdot\}$. Probabilities and expectations are denoted by $\Pr[\cdot]$ and $\EX[\cdot]$, respectively.
Natural logarithms are adopted throughout the paper.

\paragraph*{Paper organization} Section \ref{sec:Prelim} introduces the main concepts and definitions. A summary of contributions is provided at the end of the section.
Section \ref{sec:PeCap} develops the main part of this work by deriving upper and lower bounds on the error probability as well as the general mismatch capacity formula.
Section \ref{sec:Relationships} reveals connections among the mismatch capacities of relevant decoders alongside important case studies. The case of discrete-memoryless channels and product decoding metrics is studied in Section \ref{sec:DMCAdd}.
Conclusions are drawn in Section \ref{sec:Conc}. Proofs of auxiliary are deferred to the appendices.

\section{Preliminaries and Contributions} \label{sec:Prelim}
This section introduces the main definitions relevant to this work. We consider a point-to-point reliable communication setting in which $\Mn$ equiprobable messages are transmitted over a channel with transition law $\Wn$ and input-output alphabets $\Xc^n$ and $\Yc^n$. 
We begin by introducing the notion of an $(n, \Mn, \PXnU)$-code.

\begin{definition} \label{Def:Cn}
	An $(n, \Mn, \PXnU)$-code is a set of $\Mn$ equiprobable $n$-length codewords $\mathcal{C}_n = \{ \xn{(1)}, \dots, \xn{(\Mn)} \}$, where $\xn{(m)} = (\x_1{(m)}, \dots, \x_n{(m)})\in\Xc^n$ for $m=1,\dots,\Mn$. The rate of $\mathcal{C}_n$ is $\Rn = \frac1n \log \Mn$. The induced (equiprobable) codebook distribution is $\PXnU(\xn) \triangleq \frac{1}{\Mn}\mathbbm{1}\{\xn \in \mathcal{C}_n\}$.
\end{definition}

Inspired by the work in \cite{LLDecoder}, we adopt a mismatched stochastic decoder driven by a general posterior distribution of correct message decoding, as defined below.
\begin{definition} \label{Def:Dec}
	Let $\Cc_n$ be a $(n, M_n, \PXnU)$-code and $\qn\colon \Xc^n\times\Yc^n \rightarrow \RR^+$ be the stochastic likelihood decoding metric.
	Given the channel output sequence $\yn$, the stochastic decoder randomly outputs message $m$ with probability
	\begin{align} \label{back:PLD}
		P_{\hat{X}^n|\Yn}(\xn(m)| \yn) &= \frac{\qn(\xn{(m)}, \yn)}{\sum_{\bar{m}=1}^{\Mn} \qn(\xn{(\bar{m})}, \yn)}.
	\end{align}
\end{definition}

We next define the sequences of distributions and decoding metrics used throughout the paper and introduce the operators on sequences of random variables that serve the main tools of our information-spectrum analysis.
\begin{definition} \label{Def:sequences}
	
	Let $\XX=\{\Xn\}_{n\ge1}$ be the input process and $\YY=\{\Yn\}_{n\ge1}$ be the output process induced by the channel. A sequence of:%
    \begin{enumerate}[i)]
        \item probability distributions for input process $\XX$ is defined as%
        \begin{align}
    		\PXX_{\!\XX} \triangleq \{ \PXn(\xn)\colon\ \xn\in \mathcal{X}^n\}_{n\ge1}.%
    	\end{align}
        
        \item channels with $n$-letter conditional distributions $\Wn$ is defined by%
    	\begin{align}
    		\WW \triangleq \{ \Wn(\yn|\xn)\colon\ (\xn, \yn)\in \Xc^n {\times} \Yc^n\}_{n\ge1}.%
    	\end{align}

        \item stochastic likelihood decoding metrics with non-negative $n$-letter metrics $\qn$ is defined by%
    	\begin{align}
    		\qq \triangleq \{ \qn(\xn,\yn)\colon\ (\xn, \yn)\in \Xc^n {\times} \Yc^n\}_{n\ge1}.%
    	\end{align}
    \end{enumerate}
\end{definition}
\begin{definition} \label{Def:plim}
	The limit inferior in probability of a sequence of random variables $\{X_n\}_{n\ge1}$ is \cite[pp. 14]{HanInfoSpec}
	\begin{align} \label{eq:pliminfDef}
		\pliminf X_n &\triangleq \sup\{\alpha\in\RR\colon \limsup_{n\to\infty} \Pr[X_n\leq\alpha]=0\},
	\end{align}
	the largest $\alpha\in\RR$ up to which the limiting cumulative distribution is zero. Similarly, the limit superior in probability of a sequence of random variables $\{X_n\}_{n\ge1}$ is
	\begin{align}
        \label{eq:plimsupDef}
		\plimsup X_n &\triangleq \inf \{\alpha\in \RR\colon \liminf_{n\to\infty} \Pr[X_n\geq\alpha]=0\},
	\end{align}
	the smallest $\alpha\in\RR$ above which the limiting tail distribution is zero.
\end{definition}

We next define the relevant magnitudes in this work: the channel capacity and the mismatch capacity under stochastic likelihood decoding.
\begin{definition} \label{Def:C}
    The capacity of the sequence of channels $\WW$, denoted by $C(\WW)$, is the supremum of all rates for which there exists a sequence of $n$-length codes achieving arbitrarily small error probability as the block length $n$ grows to infinity.
\end{definition}

\begin{definition} \label{Def:CM}
    The mismatch capacity of the sequence of channels $\WW$ with sequence of decoding metrics $\qq$, denoted by $\CM(\WW)$, is the supremum of all rates for which there exists a sequence of $n$-length codes achieving arbitrarily small error probability as the block length $n$ grows to infinity with the stochastic likelihood decoder \eqref{back:PLD}.
\end{definition}

Verd\'u and Han \cite{verdu1994general} developed the information-spectrum framework to characterize the capacity of general sequences of channels $\WW$ in terms of the spectral inf- and sup-information rates
\begin{align}
	\II(\XX;\YY)    &\triangleq \pliminf \frac1n \imath(\Xn,\Yn)\\
	\IIsup(\XX;\YY) &\triangleq \plimsup \frac1n \imath(\Xn,\Yn),
\end{align}
defined for input and output processes $\XX$ and $\YY$, and where
\begin{align}
	\imath(\xn,\yn) \triangleq \log \frac{\Wn(\yn|\xn)}{\PYn(\yn)}
\end{align}
is the information density, i.e., the $n$-letter information random variable whose expectation is the $n$-letter mutual information $\EX[\imath(\Xn,\Yn)] = I(\Xn;\Yn)$.
The general channel capacity formula is given by \cite{verdu1994general}
\begin{align}\label{eq:GenCFormula}
	C(\WW) = \sup_{\PXX_{\!\XX}} \II(\XX;\YY)
\end{align}
where the supremum is taken over all $\PXX_{\!\XX}$ according to Definition \ref{Def:sequences}.

For the purposes of this work, we generalize the aforementioned information-spectrum framework by defining the spectral inf- and sup- mismatched information rates
\begin{align}\label{eq:Def_SpectralInfrate}
	\IIq(\XX; \YY)    &\triangleq \pliminf \frac1n \imath_{\qn} (\Xn, \Yn)\\ \label{eq:Def_SpectralSuprate}
	\IIqsup(\XX; \YY) &\triangleq \plimsup \frac1n \imath_{\qn} (\Xn, \Yn)
\end{align}	
where 
\begin{align}\label{eq:mis_inf_dens}
	\imath_{\qn} (\xn,\yn) &\triangleq \log \frac{\qn(\xn,\yn)}{\sum_{\bar\x^n}\PXn(\bar\x^n) \qn(\bar\x^n,\yn)}
\end{align}
is the mismatched information density.
The information density in \eqref{eq:mis_inf_dens} naturally generalizes to arbitrary alphabets by replacing summations with integrals under the corresponding measure.

\begin{lemma} \label{lemma:Iq}
The following relationships hold
\begin{align} \label{eq:SpectralInfrate}
	\IIq(\XX;\YY) &\leq \underline{I}(\XX;\YY) \\ \label{eq:SpectralSuprate}
		\overline{I}_{\qq} (\XX;\YY) &\leq \overline{I}(\XX;\YY).
\end{align}
\end{lemma}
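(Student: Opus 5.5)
The plan is to compare the two information densities pathwise through their difference and control that difference with a change-of-measure (Markov) bound. I would define
\begin{align}
Z_n \triangleq \imath_{\qn}(\Xn,\Yn) - \imath(\Xn,\Yn) = \log \frac{\qn(\Xn,\Yn)\,\PYn(\Yn)}{\Wn(\Yn|\Xn)\sum_{\bar x^n}\PXn(\bar x^n)\qn(\bar x^n,\Yn)} ,
\end{align}
where $(\Xn,\Yn)\sim \PXn\times\Wn$. The key computation is $\EX[e^{Z_n}]\le 1$. Averaging over $\Xn$ given $\Yn=\yn$ with weights $\PXn(\xn)\Wn(\yn|\xn)$ cancels $\Wn$ and leaves
\begin{align}
\EX[e^{Z_n}] \le \sum_{\yn}\PYn(\yn)\,\frac{\sum_{\xn}\PXn(\xn)\qn(\xn,\yn)}{\sum_{\bar x^n}\PXn(\bar x^n)\qn(\bar x^n,\yn)} \le 1 .
\end{align}
The inequality comes only from the terms with $\Wn(\yn|\xn)=0$, which have zero probability and are dropped. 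By Markov's inequality, $\Pr[Z_n> n\gamma]\le e^{-n\gamma}$ for every $\gamma>0$.

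For \eqref{eq:SpectralInfrate}, I would use the event inclusion
\begin{align}
\{\tfrac1n\imath(\Xn,\Yn)\le\beta\}\cap\{Z_n\le n\gamma\}\subseteq\{\tfrac1n\imath_{\qn}(\Xn,\Yn)\le\beta+\gamma\},
\end{align}
which gives $\Pr[\frac1n\imath_{\qn}\le\beta+\gamma]\ge \Pr[\frac1n\imath\le\beta]-e^{-n\gamma}$. Take any $\beta>\underline I(\XX;\YY)$. By Definition~\ref{Def:plim}, $\limsup_n\Pr[\frac1n\imath\le\beta]>0$, so $\limsup_n\Pr[\frac1n\imath_{\qn}\le\beta+\gamma]>0$ as well. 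Hence $\IIq(\XX;\YY)\le\beta+\gamma$, and letting $\beta\downarrow\underline I$ and $\gamma\downarrow0$ yields the claim. If $\underline I=+\infty$ there is nothing to prove.

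For \eqref{eq:SpectralSuprate}, I would argue symmetrically using
\begin{align}
\{\tfrac1n\imath_{\qn}\ge\alpha+\gamma\}\subseteq\{\tfrac1n\imath\ge\alpha\}\cup\{Z_n>n\gamma\}.
\end{align}
This gives $\Pr[\frac1n\imath_{\qn}\ge\alpha+\gamma]\le\Pr[\frac1n\imath\ge\alpha]+e^{-n\gamma}$. For any $\alpha>\overline I(\XX;\YY)$, the set defining the $\plimsup$ is upward closed, so $\liminf_n\Pr[\frac1n\imath\ge\alpha]=0$. The bound then forces $\liminf_n\Pr[\frac1n\imath_{\qn}\ge\alpha+\gamma]=0$, so $\overline I_{\qq}\le\alpha+\gamma$, and I would let $\alpha\downarrow\overline I$ and $\gamma\downarrow0$.

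The main obstacle is technical rather than conceptual: making the expectation bound rigorous when $\qn$ or $\Wn$ can vanish. I would adopt the conventions $\log 0=-\infty$ and restrict to the support of $\PXn\times\Wn$. On that support $\PYn(\yn)>0$, so $\imath$ is finite. The term $\imath_{\qn}$ may equal $-\infty$ when $\qn=0$, which only helps both inequalities. The denominator $\sum_{\bar x^n}\PXn(\bar x^n)\qn(\bar x^n,\yn)$ is positive whenever some $\qn(\xn,\yn)>0$ with $\PXn(\xn)>0$; otherwise $e^{Z_n}=0$ by convention. For general alphabets, the sums become integrals and the same computation goes through via Tonelli.
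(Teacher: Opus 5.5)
Your proof is correct and follows the paper's route. The paper writes $\frac1n\imath_{\qn}$ as $\frac1n\imath$ plus exactly your $\frac1n Z_n$, bounds the tail of that term by Markov's inequality using $\EX[e^{Z_n}]\le 1$, and then invokes Han's property $\mathrm{p}\liminf(A_n+B_n)\le \mathrm{p}\liminf A_n+\mathrm{p}\limsup B_n$. Your event inclusions reprove that property inline and also spell out the second ($\mathrm{p}\limsup$) inequality, which the appendix leaves implicit; your $\le$ in $\EX[e^{Z_n}]\le 1$ also correctly refines the paper's stated equality when $\qn>0$ where $\Wn=0$.
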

\begin{proof}
	See Appendix \ref{app:Iq}.
\end{proof}
In contrast to the matched case where $\underline{I}(\XX;\YY)\ge0$ \cite[Eq. (3.2.3)]{HanInfoSpec}, in the mismatched case $\IIq(\XX;\YY)\ge0$ is not guaranteed, similarly to the GMI rate with parameter $s=1$ \cite{Fis71,kaplan1993ira}.

\subsection{Technical Contributions}

Our main contribution to the study of stochastic decoders driven by mismatched posterior distributions is threefold:
\begin{enumerate}[1)\IEEElabelindent=0em \labelsep=1pt]
	\item \textit{Probability of error}. We derive Feinstein- and Verdú-Han-type bounds on the probability of error for general channels and decoding metrics.
	Specifically, we show that there exists a codebook $\Cc_n$ with $\Mn$ codewords, such that %
	\begin{align}
		\Pe(\Cc_n) &\leq \Pr\left[ \frac1n \imath_{\qn}(\Xn, \Yn) \leq \frac1n\log \Mn +\gamma\right] + e^{-n\gamma}
	\end{align}
	for $\gamma>0$, and where probability is taken under the joint law $\PXn \Wn$. In addition, for every codebook $\Cc_n$ with $\Mn$ codewords the error probability satisfies
	\begin{align}
		\Pe(\Cc_n) &\geq \Pr\left[ \frac1n \imath_{\qn}(\Xn, \Yn) \leq \frac1n\log \Mn-\gamma\right] - e^{-n\gamma}
	\end{align}
	 for $\gamma>0$, and where probability is taken under $\PXnU\Wn$.
	 
	\item \textit{Mismatch capacity formula}. We provide the following general formula for the channel capacity under mismatched stochastic decoding%
	\begin{align}
		\CM(\WW) = \sup_{\PXX_{\!\XX}} \IIq(\XX; \YY).
	\end{align}
	The result extends the general formula for channel capacity in \cite{verdu1994general}, given in \eqref{eq:GenCFormula}, by replacing the channel law with the decoding metric.	

	\item \textit{Mismatch capacity of discrete-memoryless channels and product metrics}. We prove that the Csiszár-Narayan conjecture holds for stochastic likelihood decoders. Specifically, we show that
\begin{align} 
		\widetilde{C}_{\q}(\W) &= \lim_{k \rightarrow \infty} \sup_{P_{\!X^k}} \frac{1}{k} \EX\! \left[ \log \frac{\q^k(X^k, Y^k)}{\EX[\q^k(\bar{X}^k, Y^k)|Y^k]} \right]
	\end{align}	
	can be achieved via the product-space improvement of either random i.i.d. or constant-composition codes.
\end{enumerate}

\section{Probability of Error and Capacity} \label{sec:PeCap}
In this section, we analyze the stochastic likelihood decoder given in \eqref{back:PLD}. The corresponding 
 average probability of correct decoding for an $(n, \Mn, \PXnU)$-code $\mathcal{C}_n$ is given by
\begin{align} \label{PeCap:PcLD1}
	\Pc(\Cn) &= \frac1\Mn \sum_{m=1}^{\Mn} \sum_{\yn\in\Yc^n}\Wn(\yn|\xn(m))  P_{\hat{X}^n|\Yn}(\xn(m)| \yn)\\
	&= \sum_{\xn\in \Xc^n} \sum_{\yn\in\Yc^n} \PXnU(\xn)\Wn(\yn|\xn) \cdot \frac{\qn(\xn, \yn)}{\sum_{\bar{m}=1}^{\Mn} \qn(\xn{(\bar{m})}, \yn)}
	\\ \label{PeCap:PcLD3}
	 &= \EX \left[ \frac{\qn(\Xn, \Yn)}{\sum_{\bar{m}=1}^{\Mn}\qn(\x^{(\bar{m})}, \Yn)} \right]
\end{align}
where the expectation is taken under the joint law $\PXnU\Wn$. The  average error probability is thus $\Pe(\Cn)=1-\Pc(\Cn)$.

We first examine the random-coding error probability. Building on the analysis in \cite{LLDecoder}, we derive an equivalent Feinstein-type lemma for mismatched stochastic decoders.
\begin{theorem} \label{Th:PeAchiev}
	Consider sequences of channels $\WW$ and decoding metrics $\qq$.
	For every $n\ge1$, there exists an $(n, \Mn, \PXnU)$-code whose error probability satisfies
		\begin{align}
	    \Pe(\Cc_n) &\leq \Pr\left[ \frac1n\imath_{\qn} (\Xn,\Yn)\leq \frac1n\log\Mn  +\gamma \right] + e^{-n\gamma}
	\end{align}
	where $(\Xn, \Yn) \thicksim \PXn \Wn$, $\gamma>0$ is arbitrary, and $\imath_{\qn}$ is the mismatched information density defined in \eqref{eq:mis_inf_dens}.
\end{theorem}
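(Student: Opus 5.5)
We use a random-coding argument. Draw the $\Mn$ codewords $\Xn(1),\dots,\Xn(\Mn)$ i.i.d. according to $\PXn$ and bound the ensemble-average error probability $\EX[\Pe(\Cc_n)]$. Some realization of the codebook is then at least as good as the average, which yields the existence claim. By symmetry we may condition on message $m=1$ being sent. Write $\Xn=\Xn(1)$ and $\Yn\sim\Wn(\cdot|\Xn)$. The expression \eqref{PeCap:PcLD3} gives the average correct-decoding probability as
\begin{align}
\EX\left[\frac{\qn(\Xn,\Yn)}{\qn(\Xn,\Yn)+\sum_{\bar m=2}^{\Mn}\qn(\Xn(\bar m),\Yn)}\right].
\end{align}
Given $(\Xn,\Yn)$, the function $t\mapsto a/(a+t)$ is convex in $t$. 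Jensen's inequality over the independent codewords $\Xn(2),\dots,\Xn(\Mn)$ therefore gives the lower bound
\begin{align}
\EX\left[\frac{\qn(\Xn,\Yn)}{\qn(\Xn,\Yn)+(\Mn-1)\,\EX[\qn(\Xbn,\Yn)\,|\,\Yn]}\right],
\end{align}
where $\Xbn\sim\PXn$ is independent of $(\Xn,\Yn)$. Dividing numerator and denominator by $\EX[\qn(\Xbn,\Yn)|\Yn]=\sum_{\bar x^n}\PXn(\bar x^n)\qn(\bar x^n,\Yn)$ rewrites the integrand as $g(Z)=Z/(Z+\Mn-1)$ with $Z=e^{\imath_{\qn}(\Xn,\Yn)}$, as in \eqref{eq:mis_inf_dens}.

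This gives the clean bound $\EX[\Pe]\le \EX[(\Mn-1)/(e^{\imath_{\qn}}+\Mn-1)]\le \EX[\min\{1,\Mn e^{-\imath_{\qn}(\Xn,\Yn)}\}]$. Next split the expectation on the event $\mathcal{E}=\{\imath_{\qn}(\Xn,\Yn)\le\log\Mn+n\gamma\}$. On $\mathcal{E}$ we bound the integrand by $1$, which contributes $\Pr[\frac1n\imath_{\qn}\le\frac1n\log\Mn+\gamma]$. On $\mathcal{E}^c$ we have $\Mn e^{-\imath_{\qn}}<e^{-n\gamma}$, which contributes at most $e^{-n\gamma}$. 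Adding the two pieces gives the stated bound on $\EX[\Pe(\Cc_n)]$ under $\PXn\Wn$. Hence some deterministic codebook attains it. That codebook has at most $\Mn$ distinct codewords; repeated codewords only hurt the stochastic decoder, and they occur with positive probability only in degenerate cases. Since the bound is for the average, existence of a code in the sense of Definition~\ref{Def:Cn} follows.

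The step needing the most care is the Jensen step. We must condition properly so that the convexity is applied only over the competing codewords while $(\Xn,\Yn)$ is held fixed. We must also handle the case $\EX[\qn(\Xbn,\Yn)|\Yn]=0$, where $\imath_{\qn}=+\infty$; there the correct-decoding probability equals $1$, consistent with the bound. The remaining manipulations are routine.
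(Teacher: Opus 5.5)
Your proposal is correct and follows essentially the paper's route. The paper invokes the random-coding union bound of \cite{LLDecoder} at $s=1$, namely $\Pe\le\EX[\min\{1,(\Mn-1)\EX[\qn(\Xbn,\Yn)|\Yn]/\qn(\Xn,\Yn)\}]$, and then applies $\min\{1,z\}\le\IND{z\ge\delta}+\delta$ with $\delta=e^{-n\gamma}$, which is exactly your split on $\mathcal{E}$; the only difference is that you derive that $s=1$ bound yourself via Jensen's inequality over the competing codewords, making the argument self-contained.

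One small correction to your edge-case remark: if $\EX[\qn(\Xbn,\Yn)|\Yn]=0$, then $\qn(\Xn,\Yn)=0$ almost surely as well (since $\Xn\sim\PXn$). That event is therefore a $0/0$ decoder-convention issue, which the paper also leaves implicit, rather than a case with $\imath_{\qn}=+\infty$ and correct decoding with probability $1$.
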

\begin{proof}
    The proof builds on the random-coding union bound in \cite[Th. 1]{LLDecoder} showing the existence of a code whose error probability is upper-bounded by
    \begin{align}
    	\Pe(\Cc_n) \leq \EX \left[ \min \left\{ 1, \frac{\Mn{-}1}{s}  \frac{\EX[\qn(\Xbn, \Yn)^s|\Yn]}{\qn(\Xn, \Yn)^s}\right\} \right]
	\label{eq:rcu_s}
    \end{align}
	for $s\in(0,1]$. This can be easily turned into a Feinstein-type bound by following the steps in \cite[Sec. 2.6.4]{Foundations}: first using the upper-bounds $\Mn-1\leq \Mn$ and $\min\{1,z\} \leq \mathbbm{1}\{z\geq\delta\} + \delta$ for any $\delta>0$, and then taking logarithms on both sides. After some algebraic manipulations, we obtain%
    \begin{align}
    	\Pe(\Cc_n) &\leq \EX\left[ \mathbbm{1}\! \left\{ \frac{\Mn}{s}  \frac{\EX[\qn(\Xbn, \Yn)^s|\Yn]}{\qn(\Xn, \Yn)^s} \geq \delta \right\} \right] + \delta\\
    	&= \Pr\left[ \frac1n \log\frac{\qn(\Xn, \Yn)^s}{\EX[\qn(\Xbn, \Yn)^s|\Yn]} \leq \Rn - \frac1n\log s - \frac1n\log \delta \right] + \delta.%
	\label{eq:feinstein_s}
    \end{align}
    The desired result follows by setting $\delta = e^{-n\gamma}$, $s=1$, and using the mismatched information density  \eqref{eq:mis_inf_dens}.
\end{proof}

We next derive, for a fixed $(n, \Mn, \PXnU)$-code, a lower bound to the error probability that is reminiscent of the Verd\'u-Han lemma \cite[Th. 4]{verdu1994general}.
\begin{theorem} \label{Th:PeConverse}
	Consider sequences of channels $\WW$, decoding metrics $\qq$, and of $(n, \Mn, \PXnU)$-codes $\{\Cc_n\}_{n\ge1}$. For every $n\ge1$, the error probability of $\Cc_n$ satisfies
	\begin{align} \label{Th1:Pe_LD}
		\Pe(\Cn) &\geq \Pr \left[\frac1n \imath_{\qn}(\Xn,\Yn) \leq \frac1n\log\Mn - \gamma \right] - e^{-n\gamma}
	\end{align}
	where $(\Xn, \Yn) \thicksim \PXnU \Wn$, $\gamma>0$ is arbitrary, and $\imath_{\qn}$ is the mismatched information density defined in \eqref{eq:mis_inf_dens}. 
\end{theorem}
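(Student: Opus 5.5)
Under $\PXnU$, the denominator of the mismatched information density is $\sum_{\bar x^n}\PXnU(\bar x^n)\qn(\bar x^n,\yn) = \frac{1}{\Mn}\sum_{\bar m}\qn(\xn(\bar m),\yn)$. Hence by \eqref{PeCap:PcLD3} the probability of correct decoding is exactly
\begin{align}
\Pc(\Cn) = \EX\left[\frac{1}{\Mn}\, e^{\imath_{\qn}(\Xn,\Yn)}\right],
\end{align}
with $(\Xn,\Yn)\sim\PXnU\Wn$. Also, for every $\yn$ the quantity $\frac{1}{\Mn}e^{\imath_{\qn}(\xn(m),\yn)}$ is the posterior probability \eqref{back:PLD}, which lies in $[0,1]$. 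This exact identity replaces the channel-output change of measure used in the Verd\'u--Han proof. I would then split the expectation on the event
\begin{align}
\mathcal{A} = \left\{\imath_{\qn}(\Xn,\Yn) \le \log\Mn - n\gamma\right\}.
\end{align}

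On $\mathcal{A}$ the integrand is at most $e^{-n\gamma}$, so the contribution from $\mathcal{A}$ is at most $e^{-n\gamma}\Pr[\mathcal{A}]\le e^{-n\gamma}$. On $\mathcal{A}^c$ I would use the bound that the integrand is at most $1$ (it is a probability), so that part contributes at most $\Pr[\mathcal{A}^c]$. Combining the two pieces gives $\Pc(\Cn)\le 1-\Pr[\mathcal{A}]+e^{-n\gamma}$, that is,
\begin{align}
\Pe(\Cn)\ge \Pr[\mathcal{A}]-e^{-n\gamma}.
\end{align}
Dividing the inequality inside $\mathcal{A}$ by $n$ yields \eqref{Th1:Pe_LD}.

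The only delicate point is recognizing that the codebook-induced distribution turns the normalizing sum into $\Mn$ times the average metric. Once this is seen, the argument is elementary and avoids the Verd\'u--Han step of bounding the probability of the decoding regions. I also need to confirm that any $\yn$ with zero total metric is handled consistently; such outputs can be assigned arbitrary decoder behavior with the convention $0/0=0$, and they only decrease $\Pc$, so the bound is unaffected.
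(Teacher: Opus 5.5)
Your proposal is correct and is essentially the paper's own proof. The paper also uses $\PXnU$ to write $\Pc(\Cn)=\EX[e^{n(Z_n-\Rn)}]$, splits on $\{Z_n\le \Rn-\gamma\}$, and bounds the integrand by $e^{-n\gamma}$ on that event and by $1$ (it is the decoder's posterior) off it.

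One small caveat concerns your edge-case remark, which the paper does not address. The bound holds when outputs with zero total metric are counted as errors, consistent with $0/0=0$ and $\imath_{\qn}=-\infty$ there. It does not hold under arbitrary decoder behavior: for example, $\qn\equiv 0$ with uniform guessing gives $\Pc(\Cn)=1/\Mn$, which can exceed $e^{-n\gamma}$. So ``they only decrease $\Pc$'' justifies the wrong direction; what you need is that the theorem concerns the decoder that errs on such outputs.
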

\begin{proof}
	We proceed by rewriting the probability of correct decoding \eqref{PeCap:PcLD3} in terms of $Z_n\triangleq\tfrac{1}{n}\imath_{\qn}(\Xn, \Yn)$ as%
	\begin{align}\label{T1:1}
		\Pc(\Cc_n) &= \EX \left[ \frac{\qn(\Xn, \Yn)}{\sum_{\bar{m}=1}^{\Mn}\qn(\x^n(\bar{m}), \Yn)} \right] \\
		&= \EX \left[ \frac{\qn(\Xn, \Yn)}{\Mn\sum_{\bar\x^n}\PXnU(\bar\x^n) \qn(\bar\x^n,\Yn)} \right] \label{T1:2}\\
		&= \EX [ e^{n (Z_n-\Rn)} ]\label{T1:3}
	\end{align}
	where $(\Xn, \Yn) \thicksim \PXnU \Wn$. We first apply a decomposition over the disjoint subsets $\{Z_n\leq \xi\}$ and $\{Z_n>\xi\}$ for any $\xi>0$, as 
	\begin{align} \label{T1:4}
		\Pc(\Cc_n) &= \EX [ e^{n(Z_n-\Rn)} \IND{Z_n\leq \xi} ] + \EX [ e^{n(Z_n-\Rn)} \IND{Z_n > \xi}],
	\end{align}
	and then upper-bound the two resulting terms separately:%
	\begin{enumerate}[i)]
		\item The first term is upper-bounded as%
		\begin{align} \label{T1:5}
			\EX [ e^{n(Z_n-\Rn)} \IND{Z_n\leq \xi} ] 
			&\leq e^{n(\xi-\Rn)} \,\Pr [ Z_n\leq \xi ] \\
			&\leq e^{n(\xi-\Rn)}.
		\end{align}		
		\item As for the second term, we use that from \eqref{T1:3}
		\begin{align} \label{T1:6}
			e^{n(Z_n-\Rn)} = \frac{\qn(\xn,\yn)}{ \sum_{\bar{m}=1}^{\Mn}\qn(\x^n(\bar{m}), \Yn)} \leq 1
		\end{align}
		for all $\xn\in\Cn$ and  all  $\yn\in\Yc^n$, and consequently
		\begin{align} \label{T1:7}
			\EX [ e^{n(Z_n-\Rn)} \IND{Z_n> \xi} ]	\leq \Pr [ Z_n> \xi ].
		\end{align}
	\end{enumerate}	 
	Finally, by choosing $\xi= \Rn - \gamma$ we get%
	\begin{align} \label{T1:8}
		\Pc(\Cc_n) \leq \Pr\left[\frac{1}{n}\imath_{q^n}(\Xn, \Yn) > \Rn - \gamma\right] + e^{-n\gamma}.
	\end{align}
	The claim follows from $\Pe(\Cc_n)=1-\Pc(\Cc_n)$.
\end{proof}

Armed with Theorems \ref{Th:PeAchiev} and \ref{Th:PeConverse} we derive the mismatch capacity under stochastic likelihood decoding in Definition \ref{Def:CM} by showing a general mismatch capacity formula.
\begin{theorem}\label{Th:CM}
	Consider a sequence of channels $\WW$ with input $\XX$, output $\YY$, and a stochastic likelihood decoder \eqref{back:PLD} with a sequence of decoding metrics $\qq$. The channel capacity under mismatched stochastic decoding is
	\begin{align}\label{Th:CM_LD}
		\CM(\WW) &= \sup_{\PXX_{\!\XX}} \IIq(\XX; \YY)
	\end{align}
	where the supremum is taken over all input probability distribution sequences $\PXX_{\!\XX}$ as given in Definition \ref{Def:sequences}. In particular, the following statements hold:
\begin{enumerate}[]
\item(Achievability) There exists a sequence of codes $\{\Cc_n\}_{n\ge1}$ with $|\Cc_n|=\Mn$ codewords, such that
\begin{align}
\liminf_{n\to\infty} \frac1n\log \Mn \leq \sup_{\PXX_{\!\XX}}\IIq(\XX; \YY)
\label{eq:thach}
\end{align}
for which $\limsup_{n\to\infty}\Pe(\Cc_n) =0$.

\item(Converse) Every sequence of codes $\{\Cc_n\}_{n\ge1}$ with $|\Cc_n|=\Mn$ codewords and
\begin{align}
	\liminf_{n\to\infty} \frac1n \log \Mn >\sup_{\PXX_{\!\XX}} \IIq(\XX; \YY)
\end{align}
satisfies $\limsup_{n\to\infty} \Pe(\Cc_n)>0$.
\end{enumerate}
\end{theorem}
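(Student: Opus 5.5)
The proof splits into an achievability part driven by Theorem~\ref{Th:PeAchiev} and a converse driven by Theorem~\ref{Th:PeConverse}. This mirrors the Verd\'u--Han argument, with $\imath$ replaced by $\imath_{\qn}$ throughout.

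\emph{Achievability.} Fix an arbitrary input sequence $\PXX_{\!\XX}$ and assume $\IIq(\XX;\YY)$ is finite; if it is $+\infty$, replace it below by an arbitrarily large finite level. Fix $\epsilon>0$ and let $\Mn=\lceil e^{n(\IIq(\XX;\YY)-2\epsilon)}\rceil$. Apply Theorem~\ref{Th:PeAchiev} with $\gamma=\epsilon/2$. For $n$ large, $\frac1n\log\Mn+\gamma\le \IIq(\XX;\YY)-\epsilon$, so
\begin{align}
\Pe(\Cc_n)\le \Pr\left[\frac1n\imath_{\qn}(\Xn,\Yn)\le \IIq(\XX;\YY)-\epsilon\right]+e^{-n\epsilon/2}.
\end{align}
By the definition \eqref{eq:pliminfDef} of the limit inferior in probability, the probability on the right has $\limsup$ equal to zero. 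Hence $\Pe(\Cc_n)\to0$ at rate $\liminf\frac1n\log\Mn=\IIq(\XX;\YY)-2\epsilon$. Letting $\epsilon\downarrow0$ and taking the supremum over $\PXX_{\!\XX}$ shows that every rate below $\sup_{\PXX_{\!\XX}}\IIq(\XX;\YY)$ is achievable, which is the statement around \eqref{eq:thach}. A diagonal choice of $\epsilon_n\downarrow0$ is needed only if one wants a single sequence of codes.

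\emph{Converse.} Take codes with $\liminf\frac1n\log\Mn> \sup_{\PXX_{\!\XX}}\IIq(\XX;\YY)+3\delta$ for some $\delta>0$. The key observation is that the equiprobable codebook distributions $\PXnU$ form a legitimate input sequence $\PXX_{\!\XX}$. For this choice, the mismatched information density \eqref{eq:mis_inf_dens} evaluated with $\PXn=\PXnU$ is exactly the quantity in Theorem~\ref{Th:PeConverse}. Hence its spectral inf-rate $\IIq$ is at most the supremum, and so is at most $\liminf\frac1n\log\Mn-3\delta$. By definition of $\pliminf$, the level $\alpha=\IIq+\delta$ satisfies $\limsup_n\Pr[Z_n\le\alpha]=c>0$, where $Z_n=\frac1n\imath_{\qn}(\Xn,\Yn)$.

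For all large $n$, $\frac1n\log\Mn-\delta\ge \IIq+2\delta>\alpha$, so $\{Z_n\le\alpha\}\subseteq\{Z_n\le \frac1n\log\Mn-\delta\}$. Theorem~\ref{Th:PeConverse} with $\gamma=\delta$ then gives $\Pe(\Cc_n)\ge\Pr[Z_n\le\alpha]-e^{-n\delta}$, and therefore $\limsup\Pe(\Cc_n)\ge c>0$.

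The main obstacle is the bookkeeping with $\liminf$ versus $\limsup$ of the rate and of the probabilities. The converse only yields $\limsup\Pe>0$, not $\liminf$, which matches the statement. One must also check that the strict inequality leaves room for the slack $\delta$, and that the edge cases $\IIq=\pm\infty$ are handled. The remaining steps are direct substitutions into Theorems~\ref{Th:PeAchiev} and~\ref{Th:PeConverse}.
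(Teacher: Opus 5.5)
Your proposal is correct and follows the paper's proof essentially step for step. Achievability plugs a rate of $\IIq(\XX;\YY)-2\epsilon$ into Theorem~\ref{Th:PeAchiev} and invokes the definition of $\pliminf$; the converse applies Theorem~\ref{Th:PeConverse} to the code-induced input sequence $\{\PXnU\}_{n\ge1}$ with a $3\delta$ slack. Your bookkeeping is somewhat more explicit than the paper's: you round $\Mn$ to an integer, note that $\PXnU$ is an admissible input sequence whose inf-rate is bounded by the supremum, and flag the $\pm\infty$ cases. In the converse, the paper's threshold $\sup_{\PXX_{\!\XX}}\IIq(\XX;\YY)+\gamma$ avoids the case where the code's own inf-rate is $-\infty$, which your choice $\alpha=\IIq+\delta$ must handle separately.
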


\begin{proof}

\noindent{\em (Achievability)}

From Theorem \ref{Th:PeAchiev}, there exists a code $\Cc_n$ for which the error probability is bounded for every $\gamma>0$ as
\begin{align}
\Pe(\Cc_n) \leq \Pr \left[ \frac1n\imath_{\qn}(\Xn,\Yn) \leq \frac1n \log \Mn + \gamma \right ] + e^{-n\gamma}.
\end{align}
Setting $\frac1n \log \Mn = \IIq(\XX; \YY) - 2\gamma$ we get
\begin{align}
	\Pe(\Cc_n) \leq \Pr \left[ \frac1n \imath_{\qn}(\Xn,\Yn) \leq \IIq(\XX; \YY) - \gamma \right ] + e^{-n\gamma}.
\end{align}
Thus, for any probability distribution sequence $\PXX_{\!\XX}$, from the definition of limit inferior in probability in \eqref{eq:pliminfDef} we have that
\begin{align}
\limsup_{n\to\infty} \Pe(\Cc_n) &\leq \limsup_{n\to\infty}  \Pr \left[ \frac1n\imath_{\qn}(\Xn,\Yn) \leq\IIq(\XX; \YY) - \gamma \right ] \\&=0. \nonumber
\end{align}
Since this holds for every $\PXX_{\!\XX}$, it also holds for the $\PXX_{\!\XX}$ achieving the supremum in \eqref{eq:thach}.

\noindent{\em (Converse)}

Fix $\gamma>0$ and suppose we can construct a sequence of codes $\{\Cc_n\}_{n\ge1}$ such that
\begin{align}
\liminf_{n\to\infty} \frac1n\log\Mn > \sup_{\PXX_{\!\XX}}\IIq(\XX; \YY) + 3\gamma
\end{align}
and $\limsup_{n\to\infty} \Pe(\Cc_n)=0$.
From the definition of limit, there exists an $n_0>0$ such that for $n>n_0$
\begin{align}
	\frac1n\log\Mn > \sup_{\PXX_{\!\XX}} \IIq(\XX; \YY) + 2\gamma.
\end{align}
Then, from Theorem \ref{Th:PeConverse} we have that
\begin{align}
\Pe(\Cc_n) &\geq \Pr \left[ \frac1n \imath_{\qn}(\Xn,\Yn) \leq \frac1n\log\Mn - \gamma \right ] - e^{-n\gamma}\\
&\geq \Pr \left[ \frac1n \imath_{\qn}(\Xn,\Yn) \leq \sup_{\PXX_{\!\XX}}\IIq(\XX; \YY) + \gamma \right ] - e^{-n\gamma}
\end{align} 
which, by taking the limit as $n\to\infty$ and using the definition of $\pliminf$ in \eqref{eq:pliminfDef}, gives
\begin{align}
	\limsup_{n\to\infty} \Pe(\Cc_n) &\geq \limsup_{n\to\infty} \Pr \left[ \frac1n \imath_{\qn}(\Xn,\Yn) \leq \sup_{\PXX_{\!\XX}}\IIq(\XX; \YY) + \gamma \right ] \\
&>0 \nonumber
\end{align}
for any sequence of codebooks $\{\Cc_n\}_{n\ge1}$.
\end{proof}

The following observations related to Theorems \ref{Th:PeAchiev}--\ref{Th:CM} and their proofs are in order.
\begin{enumerate}[1)]
\item \textit{Matched stochastic decoding}. When $\qn=\Wn$ for $n\ge1$, Theorem \ref{Th:PeAchiev} provides an analogous of Feinstein upper-bound while Theorem \ref{Th:PeConverse} readily recovers the Verd\'u-Han lower bound \cite{verdu1994general} for deterministic decoders:
\begin{align}
	\Pe(\Cc_n) &\leq \Pr\left[ \frac1n \imath(\Xn, \Yn) \leq \frac1n\log\Mn+\gamma\right] + e^{-n\gamma}\\
	\Pe(\Cc_n) &\geq \Pr\left[ \frac1n \imath(\Xn, \Yn) \leq \frac1n\log\Mn-\gamma\right] - e^{-n\gamma}.
\end{align}
These naturally extend the aforementioned results for deterministic decoders to stochastic decoders. Observe that Theorem \ref{Th:PeConverse} admits a simpler proof than that of the Verd\'u-Han lower bound \cite{verdu1994general}.
In addition, from Lemma \ref{lemma:Iq}, we immediately have the following relationship between general mismatch capacities
\begin{align} \label{eq:CMLD_ML}
	\CM(\WW) \leq \widetilde{C}_{\WW}(\WW) = C(\WW).%
\end{align}
This is not surprising given that both the stochastic and the maximum likelihood decoders achieve the same capacity and error exponent for discrete-memoryless channels \cite{LLDecoder}. In this respect, the same capacity result holds for general channels.

\item \textit{Optimal value of $s$}. The mismatch capacity formula \eqref{Th:CM_LD} does not depend on the parameter $s$, even though the upper-bound on the error probability \eqref{eq:feinstein_s} in the proof of Theorem \ref{Th:PeAchiev} can be improved by optimizing over $s \in(0,1]$. The reason is that, for capacity, $s = 1$ is the optimal choice irrespective of the channel-metric pair as the converse is independent of $s$. 
This result is not surprising since $s$ does not appear in the general capacity formulas \cite{verdu1994general, AneliaGenFormula}.
The benefit of $s\in(0,1]$ is therefore limited to tightening the achievable rates for ensembles operating on blocks of $k$ symbols and to possibly improving the error exponent.

\item \textit{Strong converse}. The channel-metric pair $(\WW, \qq)$ satisfy the strong converse property: $\Pe(\Cc_n)\to1$ whenever $\liminf_{n\to\infty} \Rn>\CM(\WW)$, if and only if
\begin{align}
	\sup_{\PXX_{\!\XX}} \IIq(\XX; \YY) = \sup_{\PXX_{\!\XX}} \overline{I}_{\qq} (\XX; \YY).
\end{align}
The result generalizes the strong converse condition for matched decoding by including possibly sub-optimal decoding metrics. The proof is almost identical to those in \cite[Th. 3.5.1]{HanInfoSpec} or \cite[Th. 8]{AneliaGenFormula}.

\item \textit{$\varepsilon$-Mismatch capacity}. Theorem \ref{Th:CM} allows the following extension to the $\varepsilon$-mismatch capacity, yielding the general mismatch capacity formula
\begin{align}
	\CM^\varepsilon(\WW) = \sup_{\PXX_{\!\XX}} \IIq^{\varepsilon} (\XX;\YY)
\end{align}
where the spectral mismatched inf-information rate is now defined as follows
\begin{align}
	\IIq^{\varepsilon} (\XX;\YY) &\triangleq \sup\bigg\{\alpha\in\RR\colon \limsup_{n\to\infty} \Pr\!\left[ \frac1n \imath_{\qn}(\Xn, \Yn)\leq\alpha \right] \leq \varepsilon \bigg\}.
\end{align}

\item \textit{Generalized stochastic decoding}. Similarly to \cite{GenStochasticDec}, the previous theorems naturally extend to a generalized stochastic likelihood decoder operating under a general function applied to $\qn$, $f(\qn)$. The analysis carries over directly upon replacing $\qn(\xn, \yn)$ with  $f(\qn(\xn, \yn))$. This yields the analogous of Theorems \ref{Th:PeAchiev} and \ref{Th:PeConverse} for the error probability
\begin{align}
	\Pe(\Cc_n) &\leq \Pr\left[ \frac1n \imath_{f(\qn)}(\Xn, \Yn) \leq \Rn +\gamma\right] + e^{-n\gamma}\\
	\Pe(\Cc_n) &\geq \Pr\left[ \frac1n \imath_{f(\qn)}(\Xn, \Yn) \leq \Rn-\gamma\right] - e^{-n\gamma}
\end{align}	
and the mismatch capacity formula
\begin{align}
	\widetilde{C}_{f(\qq)}(\WW) = \sup_{\PXX_{\!\XX}}\ \pliminf \frac1n \imath_{f(\qn)}(\Xn, \Yn).
\end{align}
\end{enumerate}

We conclude this section by showing an upper-bound to the mismatch capacity formula via the uniform integrability of the sequence of normalized mismatched information densities, as shown below.
\begin{proposition} \label{Prop:Exp}
	If the sequence of normalized mismatched information densities $\big\{\tfrac1n\imath_{\qn}(\Xn,\Yn)\big\}_{n\ge1}$ is uniformly integrable, then
	\begin{align}
		\pliminf \frac1n \imath_{\qn}(\Xn,\Yn) \leq \liminf_{n\to\infty} \frac1n \EX[\imath_{\qn}(\Xn,\Yn)]
	\end{align}
	where the expectation is taken under the joint law $\PXn\Wn$.
\end{proposition}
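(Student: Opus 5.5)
Write $Z_n\triangleq\frac1n\imath_{\qn}(\Xn,\Yn)$ with $(\Xn,\Yn)\sim\PXn\Wn$, and $\alpha\triangleq\pliminf Z_n$. The plan is a truncation argument. By the definition of the limit inferior in probability in \eqref{eq:pliminfDef}, the mass of $Z_n$ below any level slightly less than $\alpha$ vanishes. Uniform integrability guarantees that this vanishing mass cannot carry a non-negligible amount of (negative) expectation.

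First dispose of the trivial cases. If $\alpha=-\infty$ there is nothing to prove. If $\alpha=+\infty$, uniform integrability gives $\sup_n\EX|Z_n|<\infty$. But for every finite $\beta$ we have $\Pr[Z_n\le\beta]\to0$, and combined with the bound on the negative parts below this forces $\liminf_n\EX[Z_n]\ge\beta$. Hence $\EX[Z_n]\to\infty$, contradicting the uniform $L^1$ bound. So uniform integrability actually forces $\alpha<\infty$, and the inequality holds in that case as well.

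For finite $\alpha$, fix $\epsilon>0$ and set $\beta=\alpha-\epsilon$. By \eqref{eq:pliminfDef}, $\Pr[Z_n\le\beta]\to0$. Decompose
\begin{align}
\EX[Z_n] = \EX[Z_n\IND{Z_n>\beta}] + \EX[Z_n\IND{Z_n\le\beta}] \ge \beta\,\Pr[Z_n>\beta] - \EX[|Z_n|\IND{Z_n\le\beta}].
\end{align}
The first term tends to $\beta$. The second term is the main point, and this is where uniform integrability enters. Uniform integrability is equivalent to uniform absolute continuity together with $L^1$-boundedness: for every $\eta>0$ there is $\delta>0$ such that $\Pr[A]<\delta$ implies $\sup_n\EX[|Z_n|\mathbbm{1}_A]<\eta$. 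Alternatively, one can split at a truncation level $K$, writing $\EX[|Z_n|\IND{Z_n\le\beta}]\le K\Pr[Z_n\le\beta]+\sup_m\EX[|Z_m|\IND{|Z_m|>K}]$, and first let $n\to\infty$ and then $K\to\infty$. Either way the second term vanishes asymptotically, which gives $\liminf_n\EX[Z_n]\ge\alpha-\epsilon$. Letting $\epsilon\downarrow0$ yields the claim.

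The step requiring the most care is controlling $\EX[|Z_n|\IND{Z_n\le\beta}]$. The mismatched density can be very negative, since $\qn$ may vanish on sets of positive probability, and only uniform integrability prevents this negative tail from dragging the expectation down. The argument is otherwise a standard measure-theoretic fact and uses no earlier result of the paper beyond Definition \ref{Def:plim}.
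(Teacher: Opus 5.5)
Your proof is correct and follows essentially the same route as the paper: the lower bound $\EX[Z_n]\ge\beta\,\Pr[Z_n>\beta]-\EX[|Z_n|\IND{Z_n\le\beta}]$, followed by splitting the negative-tail term at a truncation level $K$ and letting uniform integrability make the remainder vanish. Your explicit treatment of the cases $\pliminf Z_n=\pm\infty$, including the observation that uniform integrability rules out $+\infty$, is a small addition that the paper leaves implicit.
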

\begin{proof}
	We use the shorthand notation: $Z_n = \frac{1}{n}\imath_{q^n}(\Xn,\Yn)$ and $\alpha\triangleq \pliminf Z_n-\delta$ for $\delta>0$. The proof extends that in \cite[Cor. 2]{AneliaGenFormula} by considering $\{Z_n\}_{n\ge1}$ uniformly integrable.
	We first decompose $\EX[Z_n]$ as 
	\begin{align}
		\EX[Z_n] &= \EX[Z_n \IND{Z_n> \alpha}] + \EX[Z_n \IND{Z_n\le \alpha}]\\
		&\geq \alpha \Pr[Z_n > \alpha] - \EX[|Z_n| \IND{Z_n \leq \alpha}].
	\end{align}
	Taking $\liminf_{n\to\infty}$ on both sides, $\Pr[Z_n> \alpha]\to 1$ and%
	\begin{align} \label{Prop:Exp:P1}
		\liminf_{n\to\infty} \EX[Z_n] &\geq \alpha - \limsup_{n\to\infty}\EX[|Z_n| \IND{Z_n\leq \alpha}].
	\end{align}
	We show that the second term can be made arbitrarily small by uniform integrability of $\{Z_n\}_{n\ge1}$. 
	Specifically, by splitting the term over $\{|Z_n|<K\}$ and  $\{|Z_n|\ge K\}$ \cite[Ch.~13]{williams1991probability}%
	\begin{align}
		\EX[|Z_n| \IND{Z_n\le \alpha}] &= \EX[|Z_n| \IND{Z_n\le\alpha, |Z_n|< K}] + \EX[|Z_n|\IND{Z_n\le \alpha, |Z_n|\ge K}] \\
		&\leq K \Pr[ Z_n\le\alpha] + \EX[|Z_n| \IND{|Z_n|\ge K}]. 
	\end{align}
	Taking $\limsup_{n\to\infty}$ on both sides, using $\Pr[ Z_n\le\alpha]\to 0$ and the uniform integrability of $\{Z_n\}_{n\ge1}$, we have%
	\begin{align}
		\limsup_{n\to\infty}\EX[|Z_n| \IND{Z_n\le \alpha}] &\leq  \epsilon
		\label{eq:uni_int}
	\end{align}
	for any $\epsilon>0$. Hence, substituting it into \eqref{Prop:Exp:P1}, recovering the definition of $\alpha$, and noting that both $\delta>0$ and $\epsilon>0$ are arbitrary, we conclude that%
	\begin{align}
		\liminf_{n\to\infty} \EX[Z_n] &\geq \pliminf Z_n.
	\end{align}
\end{proof}

A direct consequence of Proposition \ref{Prop:Exp} is the result below.
\begin{corollaryProp} \label{Cor:Exp}
	The general mismatch capacity formula is upper-bounded as%
	\begin{align}
		\CM(\WW) &= \sup_{\PXX_{\!\XX}}\ \pliminf \frac1n \imath_{\qn}(\Xn, \Yn)\\
		&\leq \sup_{\PXX_{\!\XX}}\ \liminf_{n\to\infty} \frac1n \EX[\imath_{\qn}(\Xn, \Yn)]\\
		&\leq \lim_{n\to\infty} \sup_{\PXn} \frac1n \EX\! \left[ \log \frac{\qn(\Xn, \Yn)}{\EX[\qn(\Xbn, \Yn)|\Yn]} \right]\! .
		\label{eq:last_bound}
	\end{align}
\end{corollaryProp}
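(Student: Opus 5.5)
The plan is to establish the three lines of Corollary \ref{Cor:Exp} in order.

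\emph{First line.} The equality is Theorem \ref{Th:CM}, with $\IIq(\XX;\YY)$ written out as the $\pliminf$ of the normalized mismatched information densities.

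\emph{Second line.} This follows from Proposition \ref{Prop:Exp} applied to each input sequence $\PXX_{\!\XX}$, and then taking the supremum on both sides. The catch is that Proposition \ref{Prop:Exp} requires uniform integrability, which need not hold for every $\PXX_{\!\XX}$. I would handle this in one of two ways.

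\begin{enumerate}[i)]
\item Interpret the corollary as holding under the standing assumption that the normalized densities are uniformly integrable for the relevant input sequences. This is the case of interest, e.g., finite alphabets and metrics bounded away from zero, where $\frac1n|\imath_{\qn}|$ is uniformly bounded.
\item Bypass uniform integrability from below. In the proof of Proposition \ref{Prop:Exp}, the lower-tail term $\EX[|Z_n|\IND{Z_n\le\alpha}]$ is the only place where integrability is used. Note also that $\EX[e^{nZ_n}\mid\Yn]=1$, so $\Pr[Z_n\ge t]\le e^{-nt}$. This controls the upper tail, not the lower tail.
\end{enumerate}

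Since the proposition is stated with the uniform integrability hypothesis, I would simply invoke it under that hypothesis. I regard this hypothesis as the main obstacle and would state it explicitly.

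\emph{Third line.} Fix a sequence $\PXX_{\!\XX}$. For each $n$,
\begin{align}
\frac1n\EX[\imath_{\qn}(\Xn,\Yn)]\le a_n\triangleq\sup_{\PXn}\frac1n\EX\!\left[\log\frac{\qn(\Xn,\Yn)}{\EX[\qn(\Xbn,\Yn)|\Yn]}\right],
\end{align}
because $\EX[\qn(\Xbn,\Yn)|\Yn]=\sum_{\bar x^n}\PXn(\bar x^n)\qn(\bar x^n,\Yn)$ with $\Xbn$ independent of $\Yn$ and distributed as $\PXn$. Hence $\liminf_n \frac1n\EX[\imath_{\qn}]\le\liminf_n a_n$, and taking the supremum over $\PXX_{\!\XX}$ preserves this bound, since the right-hand side no longer depends on the input sequence.

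It then remains to show that $\lim_n a_n$ exists, so that $\liminf_n a_n=\lim_n a_n$. For product metrics $\qn=\prod_i q(x_i,y_i)$ over a memoryless channel, I would show $n a_n$ is superadditive: take $P_{X^{n+m}}=P_{X^n}\times P_{X^m}$. Then the output splits into independent blocks, the denominator factorizes, and the expectation of the log-ratio adds. Fekete's lemma then gives $\lim_n a_n=\sup_n a_n$. For general sequences $(\WW,\qq)$ without such structure, I would instead read the last line as a $\limsup$, or equivalently note that $\liminf_n a_n$ is bounded by the limit whenever the limit exists.
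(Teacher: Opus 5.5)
Your proof follows the paper's skeleton. Theorem~\ref{Th:CM} gives the equality and Proposition~\ref{Prop:Exp} gives the first inequality. For the second, you bound termwise by $a_n$ and then replace $\liminf_n a_n$ with $\lim_n a_n$. Where you deviate, you are more careful than the paper, and rightly so.

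\textbf{Uniform integrability.} The paper's proof simply cites Proposition~\ref{Prop:Exp}, silently assuming uniform integrability for every input sequence. This hypothesis cannot be dropped. Take the channel on $\{0,1\}^n$ that outputs $\xn$ with probability $1-n^{-1/2}$ and otherwise a distinguishable tag $g_{\xn}$. Use the metric $\qn(\bar x^n,\yn)=\IND{\bar x^n=\yn}$ on untagged outputs, $\qn(\xn,g_{\xn})=e^{-n^2}$, and $\qn(\bar x^n,g_{\xn})=1$ for $\bar x^n\neq\xn$. The uniform input gives $\pliminf\frac1n\imath_{\qn}(\Xn,\Yn)=\log 2$, so $\CM(\WW)\ge\log 2$. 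On the other hand, a direct computation (split on whether the mode of $\PXn$ exceeds $1/2$) gives $0\le a_n\le(\log 2)/n$ for $n\ge2$. So making uniform integrability an explicit standing hypothesis, as in your option i), is necessary and not merely prudent.

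\textbf{Existence of the limit.} The paper justifies the last step by claiming that $a_n$ is non-decreasing. The product construction only yields superadditivity of $na_n$, hence $a_{kn}\ge a_n$. Your appeal to Fekete's lemma is the argument that actually delivers $\lim_n a_n=\sup_n a_n$ in the memoryless product case. For general $(\WW,\qq)$ the limit need not exist, as you observe.

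\textbf{A false aside.} This does not affect your argument, since you discard that route, but the claim in your option ii) that $\EX[e^{nZ_n}\mid\Yn]=1$ is false under the joint law $\PXn\Wn$. That identity holds for $\frac1n\imath_{\qn}(\Xbn,\Yn)$ with $\Xbn$ independent of $\Yn$, which is the standard bound for a wrong codeword. Under the joint law it fails: a noiseless binary channel with uniform input and $\qn=\Wn$ gives $Z_n=\log 2$ almost surely, contradicting $\Pr[Z_n\ge\log 2]\le 2^{-n}$. You should delete that remark.
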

\begin{proof}
	The first inequality follows from Proposition \ref{Prop:Exp}. The second inequality follows from upper-bounding each term of the sequence by the supremum over the input distribution set, followed by replacing the limit inferior with the limit as the involved sequence is non-decreasing in $n$ and bounded by the matched capacity.
\end{proof}

\section{Special Cases} \label{sec:Relationships}

\subsection{Maximum Metric Decoding}
We first consider the connection with the mismatched maximum metric decoder. As shown in \cite[Th. 2]{AneliaGenFormula}, the general capacity formula under mismatched maximum metric decoding is%
\begin{align} \label{Rel:M_Cmm}
    C_{\qq}^{\text{mm}} (\WW) = \sup_{\PXX_{\!\XX}}\ \pliminf \frac1n \log \frac{1}{\Phi_{\qn}(\Xn, \Yn)}
\end{align}
where%
\begin{align} \label{Rel:MM_Phin}
    \Phi_{\qn}(\xn, \yn) \triangleq \Pr[ \qn(\overline{X}^n, \yn) \geq \qn(\xn, \yn)]%
\end{align}
is the pairwise error probability for input $\xn$ when $\yn$ is received from the channel output.
From Markov's inequality we have
\begin{align}
    \Phi_{\qn}(\xn, \yn) &\leq \frac{\EX[\qn(\overline{X}^n, \yn)]}{\qn(\xn, \yn)} \\
    &= e^{-\imath_{\qn}(\xn, \yn)}.
\end{align}
By taking logarithms and $\pliminf$ on both sides, the following holds for every $\PXX_{\!\XX}$:%
\begin{align}
    \pliminf \frac1n \log \frac{1}{\Phi_{\qn}(\Xn, \Yn)} \geq \pliminf \frac1n \imath_{\qn}(\Xn, \Yn).
\end{align}
Therefore, it follows that the capacity under mismatched maximum metric decoding is no less than that under mismatched stochastic likelihood decoding:
\begin{align}
    C_{\qq}^{\text{mm}}(\WW) &\triangleq \sup_{\PXX_{\!\XX}}\ \pliminf \frac1n \log \frac{1}{\Phi_{\qn}(\Xn, \Yn)} \\ \label{Rel:MM_Cmm2}
    &\geq \sup_{\PXX_{\!\XX}}\ \pliminf \frac1n \imath_{\qn}(\Xn, \Yn)\\
    &= \CM(\WW).
\end{align}
This is in line with the results of \cite{LLDecoder} for discrete-memoryless channels and product metrics. The bound in \eqref{Rel:MM_Cmm2} also appears in \cite[Eq. (40)]{AneliaGenFormula} as a lower bound on $C_{\qq}^{\text{mm}}(\WW)$. Our results reveal that, in fact, it corresponds to the mismatch capacity under stochastic likelihood decoding.

The previous analysis can be easily extended to the generalized stochastic decoding metric $f(\qn)$, with $f(\cdot)$ a non-decreasing scalar function.
Since Markov's inequality also holds when applying non-decreasing functions of $\qn$ giving
\begin{align}
   \Phi_{\qn}(\xn, \yn) &\leq \frac{\EX[f(\qn(\overline{X}^n, \yn))]}{f(\qn(\xn, \yn))}
\end{align}
we get%
\begin{align}
	\widetilde{C}_{f(\qq)}(\WW) \leq C_{\qq}^{\text{mm}}(\WW).
\end{align}
Thus, the resulting stochastic decoder after applying a non-decreasing transformation of the decoding metric cannot outperform  the maximum metric decoder.
 
\subsection{The Stochastic Decoder with Tilting}

A specific case of the above transformation is applying a tilting to the decoding metric, that is, $\qn(\xn,\yn)^{\alpha}$ with $\alpha\geq0$. We consider three different cases:

    \subsubsection{$(n,\alpha)$ with $0<\alpha<\infty$}
  	According to Theorem \ref{Th:CM}%
    \begin{align} \label{Rel:CM_LDa}
		\widetilde{C}_{\qq^\alpha}(\WW) &= \sup_{\PXX_{\!\XX}}\ \pliminf \frac1n \imath_{(\qn)^\alpha}(\Xn, \Yn).
	\end{align}

    \subsubsection{$(n,\alpha)$ with $1\le\alpha<\infty$}
    We next show that $\alpha\ge1$ lends itself to interesting comparisons. 
   As implied by \eqref{eq:feinstein_s} in the proof of Theorem \ref{Th:PeAchiev}, including the additional tilting $s\in(0,1]$ and optimizing over $s$ does not change the mismatch capacity as the supremum is attained for $s=1$. We have that
   \begin{align}
		\widetilde{C}_{\qq^\alpha}(\WW) &= \sup_{\PXX_{\!\XX}}\ \pliminf \frac1n \imath_{(\qn)^\alpha}(\Xn, \Yn)\\
        &= \sup_{\PXX_{\!\XX}} \sup_{s\in(0,1]} \pliminf \frac1n \imath_{(\qn)^{s\alpha}}(\Xn, \Yn)\\
        &\geq  \sup_{\PXX_{\!\XX}}\ \pliminf \frac1n \imath_{\qn}(\Xn, \Yn)\label{eq:tilted3}\\
        &= \CM(\WW),
	\end{align}
	where \eqref{eq:tilted3} follows from choosing $s=\alpha^{-1}\leq 1$. This suggests that the tilting $\alpha\ge1$ can only improve capacity.

\subsection{Discrete-Memoryless Channels and Product Metrics} \label{sec:DMCAdd}
For discrete-memoryless channels and product metrics, both the $n$-letter channel and decoding metric factorize as products of the single-letter channel $\W$ and metric $\q$; for every $(\xn, \yn) \in \Xc^n {\times} \Yc^n$ we have%
\begin{align}
	\Wn(\yn|\xn)  &= \prod_{i=1}^{n} \W(y_i| x_i) \\
	\qn(\xn, \yn) &= \prod_{i=1}^{n} \q(x_i, y_i).
\end{align}

Our main result is based on the uniform integrability of the sequence of normalized mismatched information densities; a property that holds for a wide class of discrete-memoryless channels and product decoding metrics. 
Alongside Proposition \ref{lemma:Iq}, this allows us to show that the Csisz\'ar-Narayan conjecture holds for the stochastic decoder.

\begin{lemma}\label{lemma:UI}
	For a discrete-memoryless channel $W$ and a bounded single-letter metric $\q$ satisfying%
	\begin{align} \label{eq:lemmaUIcond}
		\min_{(x,y)\colon \W(y|x)>0} \q(x,y) >0,
	\end{align}
	the sequence $\big\{\tfrac{1}{n}\imath_{\qn}(\Xn,\Yn)\big\}_{n\ge1}$ is uniformly integrable.
\end{lemma}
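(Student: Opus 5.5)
The plan is to sandwich $Z_n\triangleq\tfrac1n\imath_{\qn}(\Xn,\Yn)$, almost surely under $\PXn\Wn$ and for an arbitrary input distribution $\PXn$, between a deterministic constant and a variable whose uniform integrability follows from the finiteness of the output alphabet. A family dominated in absolute value by a uniformly integrable family is itself uniformly integrable, so this suffices. Throughout, I write $q_{\max}\triangleq\max_{x,y}\q(x,y)<\infty$ and $q_{\min}\triangleq\min_{(x,y)\colon \W(y|x)>0}\q(x,y)>0$, using \eqref{eq:lemmaUIcond} and the finiteness of $\Xc,\Yc$ for the discrete-memoryless channel.

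\emph{Lower bound.} With probability one we have $\Wn(\Yn|\Xn)>0$, so $\qn(\Xn,\Yn)\ge q_{\min}^n$. Moreover, $\EX[\qn(\Xbn,\yn)]\le q_{\max}^n$ for every $\yn$. Together these give $Z_n\ge \log(q_{\min}/q_{\max})$ almost surely, a constant independent of $n$ and of $\PXn$.

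\emph{Upper bound.} The numerator satisfies $\qn(\Xn,\Yn)\le q_{\max}^n$. For the denominator, restrict the sum to those $\bar x^n$ with $\Wn(\yn|\bar x^n)>0$; on that set $\qn(\bar x^n,\yn)\ge q_{\min}^n$. Since $\Wn\le 1$,
\begin{align}
\EX[\qn(\Xbn,\yn)] \ge q_{\min}^n\,\Pr[\Wn(\yn|\Xbn)>0] \ge q_{\min}^n\,\EX[\Wn(\yn|\Xbn)] = q_{\min}^n \PYn(\yn).
\end{align}
Hence $Z_n\le \log(q_{\max}/q_{\min}) + V_n$, where $V_n\triangleq -\tfrac1n\log\PYn(\Yn)\ge 0$. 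Consequently $|Z_n|\le c+V_n$ with $c\triangleq\log(q_{\max}/q_{\min})$.

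\emph{Uniform integrability of $V_n$ (main step).} This is the only step that is not immediate. I would use the standard counting bound
\begin{align}
\Pr[V_n\ge t] = \sum_{\yn\colon \PYn(\yn)\le e^{-nt}}\PYn(\yn) \le |\Yc|^n e^{-nt}.
\end{align}
Writing $\EX[V_n\IND{V_n\ge K}] = K\Pr[V_n\ge K]+\int_K^\infty \Pr[V_n\ge t]\,dt$ then yields the bound $(K+1/n)\,e^{-n(K-\log|Yc|)}$, where $|Yc|$ stands for $|\Yc|$. For $K>\log|\Yc|+1$ this is at most $(K+1)e^{-(K-\log|\Yc|)}$ uniformly in $n\ge1$, and it tends to $0$ as $K\to\infty$. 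Thus $\{V_n\}_{n\ge1}$ is uniformly integrable.

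Finally, $\{c+V_n\}_{n\ge1}$ is uniformly integrable as well, so $\sup_n\EX\big[|Z_n|\IND{|Z_n|\ge K}\big]\le \sup_n\EX\big[(c+V_n)\IND{c+V_n\ge K}\big]\to0$ as $K\to\infty$. This establishes the uniform integrability of $\{Z_n\}_{n\ge1}$, with bounds that hold for every choice of $\PXn$.
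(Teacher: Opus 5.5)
Your proof is correct. It takes a different route from the paper, and the difference matters.

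The paper bounds $\sup_n\EX[Z_n^2]$ using $(a-b)^2\le 2a^2+2b^2$ together with \emph{deterministic} lower bounds on the numerator and the denominator. For the denominator it asserts
$\EX[\qn(\Xbn,\yn)]\ge \q_\star^n\Pr[\Wn(\yn|\Xbn)>0]\ge \q_\star^n/|\Xc|^n$.
The last inequality holds, e.g., for the uniform input distribution, but not for a general $P_{X^n}$. Take a noiseless binary channel with the erasures-only metric and i.i.d.\ inputs with $P_X(0)=\epsilon<1/2$. At $\yn=(0,\dots,0)$ the denominator equals $\epsilon^n<2^{-n}$, and it can be made arbitrarily small by shrinking $\epsilon$. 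Corollary~\ref{Cor:Exp} needs the lemma for every input sequence. So this gap sits exactly in the case the lemma is meant to cover: metrics that vanish where $\W$ vanishes.

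Your inequality $\Pr[\Wn(\yn|\Xbn)>0]\ge\EX[\Wn(\yn|\Xbn)]=\PYn(\yn)$ replaces the deterministic bound with a random one. You then control $V_n=-\frac1n\log\PYn(\Yn)$ in expectation using the finiteness of $\Yc$. This gives the almost-sure sandwich $-c\le Z_n\le c+V_n$, with constants independent of $P_{X^n}$, and uniform integrability follows by domination.

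The same tail estimate $\Pr[V_n\ge t]\le|\Yc|^n e^{-nt}$ also gives $\sup_n\EX[Z_n^2]\le 2c^2+2\sup_n\EX[V_n^2]<\infty$. So your key inequality is exactly what would repair the paper's second-moment argument. The remaining structural difference is that your denominator control relies on $|\Yc|<\infty$, where the paper's relies on $|\Xc|<\infty$; both hold for a DMC.
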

\begin{proof}
	Uniform integrability follows from a uniform bound on  $\EX[\imath_{\qn}(\Xn,\Yn)^2]/n^2$; see Appendix \ref{app:ProofUI} for details.
\end{proof}
The condition in Lemma \ref{lemma:UI} is not restrictive as it is satisfied by the class of strictly positive and bounded decoding metrics $\q$. 
Zero-undetected error \cite{6812221,AneliaGenFormula} is also covered since the condition allows $\q$ to have zero entries only where $\W$ is zero.

We next show that the upper-bound \eqref{eq:last_bound} in Corollary \ref{Cor:Exp} is attained by the limiting expression of the $k$-letter achievable rates for i.i.d. and constant-composition random coding, optimized over the corresponding $k$-letter input distributions. For the stochastic decoder \cite{LLDecoder} and each $k\ge1$ these read
\begin{align} \label{eq:GMIk}
	\widetilde{C}^{(k)}_{\text{GMI}} &\triangleq \sup_{P_{\!X^k}} \frac1k\tilde{I}^{(k)}_{\text{GMI}}(P_{X^k}) =  \sup_{P_{\!X^k}}\sup_{s\in(0, 1]} \frac1k\EX\! \left[ \log \frac{\q(X^k, Y^k)^s}{\EX[\q(\bar{X}^k, Y^k)^s|Y^k]}\right] \\ \label{eq:LMk}
	\widetilde{C}^{(k)}_{\text{LM}} &\triangleq \sup_{P_{\!X^k}}\frac1k\tilde{I}^{(k)}_{\text{LM}} (P_{X^k})=\sup_{P_{\!X^k}} \sup_{\substack{s\in(0, 1]\\a^k(\cdot)}} \frac1k\EX\! \left[ \log \frac{\q^k(X^k, Y^k)^s e^{a^k(X^k)}}{\EX[\q^k(\bar{X}^k, Y^k)^s e^{a^k(\bar{X}^k)}|Y^k]}\right] \!
\end{align}
where the inner and outer expectations are taken over $P_{X^k}$ and $P_{X^k}W^k$, respectively.
\begin{theorem} \label{Th:CM_DMCadd}
	Consider a discrete-memoryless channel $\W$ with input-output alphabets $\Xc$ and $\Yc$, respectively, and a bounded decoding metric $\q$ satisfying \eqref{eq:lemmaUIcond}. 
	The channel capacity under mismatched stochastic likelihood decoding admits the following equivalent expressions:%
	\begin{align} 
		\widetilde{C}_{\q}(\W) &= \lim_{k \rightarrow \infty}  \widetilde{C}^{(k)}_{\emph{GMI}} \\
		&= \lim_{k \rightarrow \infty}  \widetilde{C}^{(k)}_{\emph{LM}}
		\\ \label{eq:CM_DMCadd}
		&= \lim_{k \rightarrow \infty} \sup_{P_{\!X^k}} \frac1k \EX\! \left[ \log \frac{\q^k(X^k, Y^k)}{\EX[\q^k(\bar{X}^k, Y^k)|Y^k]} \right]
	\end{align}	
	achieved via the large-$k$ product-space improvement for i.i.d. and constant-composition random coding.
\end{theorem}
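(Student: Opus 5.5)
The plan is a sandwich argument. I will show that the limiting multi-letter expression \eqref{eq:CM_DMCadd} upper-bounds $\widetilde{C}_{\q}(\W)$, and that both $\lim_k \widetilde{C}^{(k)}_{\text{GMI}}$ and $\lim_k \widetilde{C}^{(k)}_{\text{LM}}$ are achievable and at least as large as \eqref{eq:CM_DMCadd}.

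\emph{Converse.} I start from Theorem \ref{Th:CM}, which gives $\widetilde{C}_{\q}(\W) = \sup_{\PXX_{\!\XX}} \IIq(\XX;\YY)$. For every input sequence $\PXX_{\!\XX}$, Lemma \ref{lemma:UI} guarantees that $\{\tfrac1n \imath_{\qn}(\Xn,\Yn)\}_{n\ge1}$ is uniformly integrable, because $\q$ is bounded and satisfies \eqref{eq:lemmaUIcond}. The one point to check is that the bound in Appendix \ref{app:ProofUI} is uniform over the choice of $\PXn$. This should hold: the term $\log \qn(\xn,\yn)$ is bounded per letter on the support of $\Wn$, and the denominator satisfies $\sum \PXn \qn \le (\max \q)^n$ and $\sum \PXn \qn \ge \qn(\xn,\yn)\PXn(\xn)$. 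Proposition \ref{Prop:Exp} and Corollary \ref{Cor:Exp} then give $\widetilde{C}_{\q}(\W) \le \lim_{n} \sup_{\PXn} \frac1n \EX[\imath_{\qn}(\Xn,\Yn)]$, which is exactly \eqref{eq:CM_DMCadd} with $k=n$.

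I also need the limit in \eqref{eq:CM_DMCadd} to exist. The sequence $a_k = \sup_{P_{X^k}} \EX[\imath_{\q^k}]$ is superadditive. To see this, take a product input $P_{X^{k}}\times P_{X^{l}}$. Because $\q$ is a product metric and the channel is memoryless, the conditional expectation in the denominator factorizes, so $a_{k+l}\ge a_k+a_l$. Since $a_k/k$ is bounded by $\log(\max\q/\min\q)$-type constants, Fekete's lemma gives $\lim_k a_k/k = \sup_k a_k/k$.

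\emph{Achievability.} For fixed $k$, I use i.i.d. random coding over the super-alphabet $\Xc^k$ with distribution $P_{X^k}$. Theorem \ref{Th:PeAchiev} applies with $\PXn = P_{X^k}^{n/k}$. The normalized mismatched information density $\frac1n \imath_{\qn}$ then becomes an average of $n/k$ i.i.d. copies of $\imath_{\q^k}(X^k,Y^k)$. This holds precisely because of the product structure: the denominator factorizes across blocks. The block density has finite mean, since it is bounded on the channel support by \eqref{eq:lemmaUIcond} and boundedness of $\q$. The weak law of large numbers therefore gives $\pliminf \frac1n \imath_{\qn} = \frac1k \EX[\imath_{\q^k}(X^k,Y^k)]$. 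When $n$ is not a multiple of $k$, I pad the remaining coordinates with a fixed symbol; this costs $o(1)$ in the rate.

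It follows that $\widetilde{C}_{\q}(\W) \ge \frac1k \sup_{P_{X^k}} \EX[\imath_{\q^k}]$, which is the $s=1$ value of $\widetilde{C}^{(k)}_{\text{GMI}}$. The same bound holds with $\q^s$ in place of $\q$ for every $s\in(0,1]$, by the tilting remark of \eqref{eq:feinstein_s}: the $s$-version of Theorem \ref{Th:PeAchiev} has an extra $\frac1n\log s$ term, which vanishes. Hence $\widetilde{C}^{(k)}_{\text{GMI}} \le \widetilde{C}_{\q}(\W)$.

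For the LM rate, I use constant-composition codes over $\Xc^k$ as in \cite{LLDecoder}, whose achievability for the stochastic decoder is known. Alternatively, I embed $a^k(\cdot)$ by viewing the constant-composition ensemble as a code and applying Theorem \ref{Th:CM}. This gives $\widetilde{C}^{(k)}_{\text{LM}} \le \widetilde{C}_{\q}(\W)$.

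\emph{Sandwich.} Taking $a^k \equiv 0$ and $s=1$ gives
\[
\frac1k\sup_{P_{X^k}}\EX[\imath_{\q^k}] \le \widetilde{C}^{(k)}_{\text{GMI}} \le \widetilde{C}^{(k)}_{\text{LM}} \le \widetilde{C}_{\q}(\W) \le \lim_{k} \frac1k\sup_{P_{X^k}}\EX[\imath_{\q^k}] .
\]
Letting $k\to\infty$, all the quantities coincide. The GMI and LM limits exist because they are squeezed between two quantities with the same limit.

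\emph{Main obstacle.} I expect the hardest step to be the rigorous achievability of the $k$-letter LM rate with constant-composition codes inside this information-spectrum framework. Theorem \ref{Th:PeAchiev} is stated for i.i.d.-type random coding. If adapting it to constant-composition codes proves delicate, I will rely only on the GMI ensemble for the lower bound. This still suffices for the equalities, since LM is sandwiched between GMI and the converse, provided LM achievability itself is taken from \cite{LLDecoder}.
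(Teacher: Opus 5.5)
Your proposal is correct and follows the paper's proof. The converse goes through Lemma~\ref{lemma:UI}, Proposition~\ref{Prop:Exp} and Corollary~\ref{Cor:Exp}; achievability uses block-i.i.d.\ random coding with $s=1$ (and $a^k\equiv 0$ for the LM rate), with LM achievability itself taken from \cite{LLDecoder}, as the paper implicitly does. Your Fekete-lemma argument for the existence of the limit is cleaner than the paper's monotonicity claim; bound $a_k/k$ by $\log|\Xc|$ via $\EX[\imath_{\q^k}]\le I(X^k;Y^k)$, since $\q$ may vanish off the support of $\W$. Your uniformity check of the second-moment bound over $\PXn$ is worth keeping, though to close it you still need the easy fact that $\EX[\log^2 \PXn(\Xn)]=O(n^2)$ uniformly over all distributions on $\Xc^n$.
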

\begin{proof}
The converse is established in Corollary \ref{Cor:Exp}. The achievability follows from the product-space improvement of random i.i.d. codes:
\begin{align} \label{CMadd:ach1}
	\widetilde{C}_{\q}(\W) &\geq \sup_{k\ge1}\ \widetilde{C}^{(k)}_{\text{GMI}} \\ \label{CMadd:ach2}
	&\geq \sup_{k\ge1}\ \sup_{P_{\!X^k}} \frac1k \EX\!\left[ \log \frac{\q^k(X^k, Y^k)}{\EX[\q^k(\bar{X}^k, Y^k)|Y^k]} \right]\\ \label{CMadd:ach3}
	&= \lim_{k\to\infty} \sup_{P_{\!X^k}} \frac1k \EX\!\left[ \log \frac{\q^k(X^k, Y^k)}{\EX[\q^k(\bar{X}^k, Y^k)|Y^k]} \right]\!.
\end{align}
\eqref{CMadd:ach2} follows by setting $s=1$ in \eqref{eq:GMIk}. Equation \eqref{CMadd:ach3} holds as the sequence indexed by $k$ in \eqref{CMadd:ach2} is, by construction, non-decreasing and bounded by the matched capacity. In the case of the product-space improvement of the LM rate, we set $s=1$ and $a^k(x^k)=0$ for every $x^k\in\Xc^k$.
\end{proof}

The following observations from Theorem \ref{Th:CM_DMCadd} are in order:%
\begin{enumerate}[1)]
	\item \textit{Scaling of $k\equiv k_n$ with block length $n$.} Capacity is not achieved for arbitrary scalings of $k_n$ with $n$. The $k_n$-symbol blocks in the product-space construction may grow with $n$, but only slowly enough. For random i.i.d. codes we need $k_n=o(n)$ to use the law of large numbers in the achievability \cite{Foundations}. For random constant-composition codes we need $k_n = o(\log n)$ so that the number of type classes remains sub-exponential in $n$.
	
	\item \textit{Optimization over $P_{X^k}$.} The benefits of optimizing over $s$ and $a^k$ vanish  as $k\to\infty$, and optimizing only the multi-letter input distribution suffices to achieve capacity. 
		
	\item \textit{Advantage of parameters $s$ and $a^k$.} Consider the case of $\W$ and $\q$ both being binary symmetric channels with crossover probabilities $p\leq0.5$ and $p'\leq p$, respectively. In this case, $\widetilde{C}_{\q}(\W) = C(\W) = \log 2 - h(p)$, where $h(p)$ is the binary entropy function. The capacity is achieved for $k=1$ with random i.i.d. codes by choosing $s = \log \frac{1-p}{p}/\log \frac{1-p'}{p'}\leq 1$. In contrast, for $s=1$ the same capacity is achieved for optimized $P_{X^k}$ for sufficiently large $k$. 
	
	\item \textit{Support of the optimal input distribution}. Both $k$-letter achievable rates are upper-bounded by the cardinality of the input distribution support as
	\begin{align}
		\frac1k \tilde{I}^{(k)}_{\text{GMI}}(P_{X^k}) \leq \frac1k \tilde{I}^{(k)}_{\text{LM}}(P_{X^k}) \leq \frac1k \log |\text{supp}(P_{X^k})|.
	\end{align}
	The optimal input distribution must have support growing exponentially in $k$ so that the right-hand side does not vanish.
	
	\item \textit{Mismatch capacity gap}. Similarly to \cite[Def. 10]{AneliaGenFormula}, a rewriting of \eqref{eq:CM_DMCadd} expresses the mismatch capacity in terms of the normalized mutual information $I(X^k; Y^k)/k$ as%
	\begin{align} 
		\widetilde{C}_{\q}(\W) = \lim_{k \rightarrow \infty} \sup_{P_{\!X^k}} 
		\left[ 
		\frac{1}{k} I(X^k; Y^k) - \frac1k \DKL( P_{X^k}\W^k \| \widetilde{P}_{X^k Y^k}) \right].
	\end{align}	
	where $\DKL(P\| Q)$ is the relative entropy between distributions $P, Q$, and
	\begin{align}
		\widetilde{P}_{X^k Y^k}(x^k, y^k) &\triangleq \frac{P_{X^k}(x^k)\q^k(x^k, y^k) P_{Y^k}(y^k)}{\EX[\q^k(\bar{X}^k, y^k) ]}
	\end{align}
	denotes a $k$-th order joint distribution induced by the decoding metric. Therefore, the mismatch capacity gap $\eta \triangleq C(\W) - \widetilde{C}_{\q}(\W)$ is controlled by minimizing the relative entropy between the joint distributions induced by the channel and decoding metric
	\begin{align}
		0 \leq \eta \leq \lim_{k \rightarrow \infty} \inf_{P_{\!X^k}} \frac{1}{k} \DKL( P_{X^k}\W^k \| \widetilde{P}_{X^k Y^k})
	\end{align}
	which can be weakened at any order $k$ as
	\begin{align}
		0 \leq \eta \leq \inf_{P_{\!X^k}} \frac1k \DKL( P_{X^k}\W^k \| \widetilde{P}_{X^k Y^k}).
	\end{align}
	
	\item \textit{Particular cases.} We recover the following capacities:
	\begin{enumerate}
		\item[i)] Matched stochastic decoding $\q(x,y)=\W(y|x)$ \cite{HanInfoSpec}:
		\begin{align}
			\widetilde{C}_{\W}(\W) = \sup_{P_X} \EX\! \left[ \log \frac{\W(Y|X)}{\sum_{\bar{x}} P_X(\bar{x}) \W(Y|\bar{x})} \right]\!.
		\end{align}		
		\item[ii)] Erasures-only metric $\q(x,y) = \mathbbm{1}[\W(y|x)>0]$ \cite{6812221}:
		\begin{align}
			\widetilde{C}_{\rm{eo}}(\W) = \lim_{k\to\infty} \sup_{P_{\!X^k}} \frac{1}{k} \EX\! \left[ \log \frac{1}{\sum_{\bar{x}^k: \W^k(Y^k|\bar{x}^k)>0}P_{X^k}(\bar{\x}^k)}\right]\!. 
		\end{align}
	\end{enumerate}
\end{enumerate}

\section{Conclusion} \label{sec:Conc}
We have investigated mismatched stochastic likelihood decoders with general decoding metrics. We derived bounds on the probability of error similar to those of Feinstein and Verd\'u and Han. We derived the corresponding information-spectrum formula for channel capacity, which is the direct analog of the capacity formula of Verd\'u and Han for deterministic matched decoding.
We also derive an upper-bound on the mismatch capacity formula when the sequence of normalized mismatched information densities is uniformly integrable.
For discrete-memoryless channels and product metrics, we have shown that the aforementioned upper-bound is achievable, implying that the Csiszár-Narayan conjecture via the product-space improvements of both random i.i.d. and constant-composition codes holds.

\appendices
\section{Proof of Lemma \ref{lemma:Iq}} \label{app:Iq}
This appendix proves the following relationship for every input distribution sequence:%
\begin{align} \label{appR1:SpectralIM}
	\pliminf \frac1n \imath_{\qn}(\Xn, \Yn) \leq \pliminf \frac1n \imath(\Xn,\Yn).%
\end{align}

The proof is relatively straightforward from writing
\begin{align} \label{appR1:eq2}
	\frac1n \imath_{\qn} (\Xn, \Yn) &= \frac1n \imath(\Xn, \Yn) + \frac1n \log \frac{\qn(\Xn, \Yn)}{\Wn(\Yn|\Xn)}\frac{\EX[\Wn(\Yn|\Xbn)]}{\EX[\qn(\Xbn, \Yn)]}.
\end{align}

After taking $\pliminf$ on both sides of \eqref{appR1:eq2} and using Property 4 in \cite[pp. 14]{HanInfoSpec}, we have%
\begin{align}
	\pliminf \frac1n \imath_{\qn}(\Xn, \Yn) &\leq \pliminf \frac1n \imath(\Xn, \Yn) + \plimsup \frac1n \log \frac{\qn(\Xn,\Yn)}{\Wn(\Yn|\Xn)}\frac{\EX[\Wn(\bar{Y}^n|\bar{X}^n)]}{\EX[\qn(\bar{X}^n,\bar{Y}^n)]}.
\end{align}

The final result follows from the fact that the latter term is shown to be non-positive. Indeed, we have%
\begin{align} \label{appR1:npositive1}
	\Pr\left[ \frac1n \log \frac{\qn(\Xn,\Yn)}{\Wn(\Yn|\Xn)}\frac{\EX[\Wn(\Yn|\bar{X}^n)]}{\EX[\qn(\bar{X}^n,\Yn)]} \geq \epsilon \right] &= \Pr\left[ \frac{\qn(\Xn,\Yn)}{\Wn(\Yn|\Xn)}\frac{\EX[\Wn(\bar{Y}^n|\bar{X}^n)]}{\EX[\qn(\bar{X}^n,\bar{Y}^n)]}\geq e^{n\epsilon} \right]
	\\ \label{appR1:npositive2}
	&\leq e^{-n\epsilon} \cdot \EX\left[\frac{\qn(\Xn,\Yn)}{\Wn(\Yn|\Xn)}\frac{\EX[\Wn(\Yn|\bar{X}^n)]}{\EX[\qn(\bar{X}^n,\Yn)]} \right] 
	\\ \label{appR1:npositive3}
	&= e^{-n\epsilon} 
\end{align}
where \eqref{appR1:npositive2} is obtained from Markov's inequality. Hence, the above probability vanishes as $n\to\infty$ for every $\epsilon>0$, and, according to the operator $\plimsup$ defined in \eqref{eq:plimsupDef}, the infimum is upper-bounded by zero.

\section{Proof of Lemma \ref{lemma:UI}} \label{app:ProofUI}
This appendix proves the uniform integrability of $\{Z_n\}_{n\ge1}$ with $Z_n \triangleq \tfrac1n\imath_{\qn}(\Xn,\Yn)$ for a product metric with a bounded single-letter decoding metric $\q(x,y)$ that satisfies
\begin{align} \label{eq:CondUI}
	\q_{\star} = \min_{(x,y)\colon \W(y|x)>0} \q(x,y) >0.
\end{align}

All we need to prove is that for every $\epsilon>0$, there exists $K>0$ such that
\begin{align}
	\sup_{n\ge1} \EX[|Z_n| \IND{|Z_n|\ge K}] \leq \epsilon.
\end{align}
Uniform integrability is immediate under bounded and strictly positive decoding metrics. Our proof is more involved since we want to encompass the case where $q(x,y)=0$ when $W(y|x)>0$, like in the zero-undetected error case.
Following \cite[Sec. 13.3]{williams1991probability}, we show the above from a uniform bound on the second moments of $Z_n$. Specifically,
\begin{align}
	\sup_{n\ge1} \EX [ |Z_n| \IND{|Z_n| \ge K} ] \leq \sup_{n\ge1} \frac{\EX( Z_n^2 )}{K} 
\end{align}
for every $K>0$. Hence, if $\sup_{n\ge1} \EX[Z_n^2]<\infty$, we can always choose $K$ sufficiently large so that the right hand side is smaller than $\epsilon$, therefore proving that $\{Z_n\}_{n\ge1}$ is uniformly integrable.

We now verify the required second moment bound:%
\begin{align}
	\EX[Z_n^2] &= \frac{1}{n^2} \EX[  \imath_{\qn}(\Xn, \Yn)^2 ] \\
	&= \frac{\EX [ ( \log \qn(\Xn, \Yn) - \log \EX[\qn(\Xbn, \Yn)| \Yn] )^2 ]}{n^2}\\
	&\leq \frac{2 \EX[ \log^2 \qn(\Xn, \Yn)]}{n^2} + \frac{2\EX[ \log^2 \EX[\qn(\Xbn, \Yn)| \Yn] ]}{n^2}
\end{align}
where we have used $(a-b)^2\leq 2a^2+2b^2$.

We now bound each term separately, assuming that the single-letter decoding metric $\q(x,y)$ satisfies%
\begin{align}
	0 < \q_{\star} \leq \q(x, y) \leq 1
\end{align}
for every $(x,y)\in \Xc\times\Yc$ with $\W(y|x)>0$.
If $\q(x,y)>1$, the above can always be enforced by scaling $\q$ by a positive constant, which does not affect the decoder. The bounds on each term rely on the fact that $\log^2 (x)$ is decreasing in $(0,1]$ and thus $\log^2 f(x) \leq \log^2 \min_{x\in(0,1]} f(x)$. Specifically:%
\begin{enumerate}[i)]
	\item For the first term, $\qn(\xn, \yn)\geq \qn_{\star} \cdot \mathbbm{1}\{\Wn(\yn|\xn)>0\}$ and thus%
	\begin{align}
		\frac{2 \EX[ \log^2 \qn(\Xn, \Yn)]}{n^2} &\leq 2\log^2 \q_{\star}. 
	\end{align}
	
	\item As for the second term, we use that $\EX[\qn(\Xbn, \yn)] \geq \qn_{\star} \cdot\Pr[\Wn(\yn|\Xbn)>0] \geq \qn_{\star} / |\Xc|^n$ and thus
	\begin{align}
		\frac{2\EX[ \log^2 \EX[\qn(\Xbn, \Yn)| \Yn] ] }{n^2} & \leq 2 \log^2 \frac{\q_{\star}}{|\Xc|}.
	\end{align}
\end{enumerate}
By combining both results, $\EX[Z_n^2] \leq 2 \log^2 \q_{\star} + 2\log^2 \q_{\star}/|\Xc|$. Hence, the statement follows since $\q_{\star}>0$ and the input alphabet is discrete.

\bibliographystyle{IEEEtran}
\bibliography{finalreport}
\end{document}